\documentclass[a4paper,11pt]{article}

\usepackage[twoside,a4paper,margin=1in]{geometry}
\usepackage[utf8]{inputenc}
\usepackage{graphicx,url,amssymb,amsmath,amsthm}
\usepackage{thm-restate}
\usepackage{xcolor}
\usepackage{todonotes}

\theoremstyle{plain}
\newtheorem{theorem}{Theorem}[section]
\newtheorem{proposition}[theorem]{Proposition}

\newtheorem{claim}[theorem]{Claim}
\newtheorem{lemma}[theorem]{Lemma}
\newtheorem{remark}[theorem]{Remark}

\usepackage{hyperref}

\author{Vincent Delecroix\thanks{Univ. Bordeaux, CNRS, Bordeaux INP, LaBRI, UMR 5800, F-33400 Talence, France, \href{mailto:vincent.delecroix@u-bordeaux.fr}{vincent.delecroix@u-bordeaux.fr}}\and Oscar Fontaine\thanks{Univ. Bordeaux, CNRS, Bordeaux INP, LaBRI, UMR 5800, F-33400 Talence, France, \href{mailto:oscar.fontaine@u-bordeaux.fr}{oscar.fontaine@u-bordeaux.fr}} \and Arnaud de Mesmay\thanks{Univ Gustave Eiffel, CNRS, LIGM, F-77454 Marne-la-Vallée, France, \href{mailto:arnaud.de-mesmay@univ-eiffel.fr}{arnaud.de-mesmay@univ-eiffel.fr}}}

\renewcommand{\ge}{\geqslant}
\renewcommand{\le}{\leqslant}
\renewcommand{\geq}{\geqslant}
\renewcommand{\leq}{\leqslant}
\renewcommand{\epsilon}{\varepsilon}

\newcommand{\tw}{\operatorname{tw}}
\newcommand{\rad}{\operatorname{rad}}
\newcommand{\crossnum}{\operatorname{cr}}

\definecolor{definecolor}{rgb}{0,0.1,0.55}
\def\define#1{\textbf{\textcolor{definecolor}{#1}}}

\everymath{\displaystyle}

\title{On the size of \texorpdfstring{$k$}{k}-irreducible triangulations\thanks{This work was funded by the ANR-SNF project SUGAR (ANR-25-CE40-0416, SNF 200021E\_238147). Work of the first author was funded by ANR MOST (ANR-23-CE40-0020) and ANR CarteEtPlus (ANR-23-CE48-0018).}}

\begin{document}

\maketitle

\begin{abstract}
    A triangulation of a surface is $k$-irreducible if every non-contractible curve has length at least $k$ and any edge contraction breaks this property. Equivalently, every edge belongs to a non-contractible curve of length $k$ and there are no shorter non-contractible curves. We prove that a $k$-irreducible triangulation of an orientable surface of genus $g$ has $O(k^2g)$ triangles, which is optimal. This is an improvement over the previous best bound $k^{O(k)} g^2$ of Gao, Richter and Seymour [\emph{Journal of Combinatorial Theory, Series B, 1996}].
\end{abstract}

\section{Introduction}
The main objects of study in this article are \define{(surface) triangulations}. There are many inequivalent definitions of those, and for us they will be simplicial complexes homeomorphic to surfaces. Rephrased in a graph-theoretical language, this means that a triangulation is a graph embedded in a surface where all the faces are topological disks and have degree three, and the graph is \define{simple}: multiple edges and loops are forbidden. 

In 1922, Steinitz~\cite{steinitz2013vorlesungen} proved that all triangulations of the $2$-dimensional sphere could be obtained from the tetrahedron by \emph{splitting vertices}. Nowadays, it is more common to consider the inverse operation of \define{contracting an edge}, which consists in identifying the endpoints of an edge and collapsing the two adjacent triangles. Such a contraction must preserve the properties of triangulations, and in particular must not induce multiple edges and loops. Thus, Steinitz proved that the tetrahedron is the unique \define{irreducible triangulation} of the sphere: every triangulation of the sphere can be reduced to the tetrahedron by iteratively contracting edges, and the tetrahedron cannot be contracted further.

Barnette initiated the study of the analogous problem on the projective plane~\cite{barnette1982generating}, which led to the classification of irreducible triangulations of surfaces up to Euler genus $4$ by Sulanke~\cite{sulanke2006generating} by means of the computer program \texttt{surftri}. It is now known that for every surface of Euler genus $g$, the number of irreducible triangulations is finite and that they have at most $13g-4$ vertices, see Joret and Wood~\cite{joret2010irreducible}. The latter improves previous bounds by Barnette and Edelson~\cite{BarnetteEdelson1989} and by Nakamoto and Ota~\cite{NakamotoOta1995}. This bound has been extended to surfaces with boundaries by Boulch, Colin de Verdi{\`e}re and Nakamoto~\cite{BoulchColinDeVerdiereNakamoto2013}.

A closed curve on a surface is \define{non-contractible} if it is not homotopic (intuitively, if it cannot be deformed continuously) to a point. Any closed walk on the edges of a triangulation determines a closed curve on the underlying surface. The \define{edge-width} of a triangulation is the length of the shortest (by the number of edges) non-contractible closed walk. A triangulation is \define{$k$-irreducible} if it has edge-width at least $k$ and it is irreducible for this property: any edge contraction reduces this edge-width. Equivalently, every edge belongs to a non-contractible curve of length $k$ and there are no shorter non-contractible curves. We refer to the PhD thesis of Melzer for a proof of this equivalence~\cite[Lemma~3.1]{melzer2019k} and for a thorough introduction to $k$-irreducible triangulations, in particular the different motivations for which they have been introduced.

Since loops and multiple edges are forbidden, triangulations of surfaces of positive Euler genus have edge-width at least $3$. From the definitions, it is easy to see that irreducible triangulations and $3$-irreducible triangulations match. Therefore, $k$-irreducible triangulations generalize irreducible triangulations, and one is naturally led to wonder about their combinatorial properties. 

This problem was first studied by Malni\v{c} and Nedela~\cite{malnic1995k} who proved that the number of $k$-irreducible triangulations of a surface is finite (they also observed that this follows from a variant of Robertson-Seymour theory), see also Juvan, Malni\v{c} and Mohar~\cite{juvan1996systems}. The current best known bound, due to Gao, Richter and Seymour~\cite{gao1996irreducible} is that a $k$-irreducible triangulation of a surface of Euler genus $g$ has $k^{O(k)} g^2$ edges. In that paper, the authors say that they expect the correct bound to depend linearly on $g$ instead of quadratically. In his aforementioned PhD thesis, Melzer~\cite[p.~18]{melzer2019k} states ``Our expectation would be something like $O(k^2 g)$, but we lack any evidence at all to conjecture''. Our main result confirms the expectations of Gao, Richter, Seymour and Melzer.

\begin{restatable}{theorem}{main}\label{th:main}
Any $k$-irreducible triangulation of an orientable surface $S$ of genus $g$ has at most $966 k^2 g = O(k^2 g)$ edges.
\end{restatable}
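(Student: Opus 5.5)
The plan is to charge every edge of the triangulation $T$ to a small family of short non-contractible curves, and to show that this family has $O(g)$ members up to a controlled multiplicity.

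By $k$-irreducibility, every edge $e$ lies on a non-contractible closed walk $\gamma_e$ of length exactly $k$. Since the edge-width is $k$, each such $\gamma_e$ is a shortest non-contractible curve, so it is simple. I would also arrange, by standard exchange arguments, that the family $\{\gamma_e\}$ consists of geodesics crossing each other at most once. If two shortest non-contractible curves cross twice, one can swap subpaths between the crossings. This produces curves that are no longer and still non-contractible, unless the two subpaths bound a disk, in which case we may reroute.

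The key step is to select a subfamily $\Gamma$ of these systoles that covers all edges and in which the curves fall into $O(g)$ homotopy classes, each class represented by $O(k)$ curves. Given such a $\Gamma$, the edge count is at most $k\cdot|\Gamma|$. To get $O(k^2 g)$ we therefore need $|\Gamma|=O(kg)$.

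First I would bound the number of homotopy classes. A family of pairwise non-homotopic simple closed curves pairwise crossing at most once on a genus-$g$ surface can have size superlinear in $g$, so that count alone does not work. Instead I would use a greedy argument. Fix a maximal system of pairwise disjoint, non-homotopic systoles; there are at most $3g-3$ of them. Cutting along them yields pairs of pants and annuli, and every remaining systole must cross one of the cut curves. Inside each piece, a systole is determined up to homotopy by a bounded combinatorial pattern, because it crosses each boundary curve few times (length $k$ against boundary length $k$).

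Second, within one homotopy class (or one annulus of parallel systoles), I would show that the curves covering edges can be chosen from $O(k)$ "layers". Consider an annulus foliated by homotopic systoles of length $k$ at bounded distance. The annulus between two homotopic systoles of length $k$ is a flat-like region. The edges it contains, which lie on length-$k$ non-contractible curves, are covered by at most about $k$ parallel systoles, since any curve of length $k$ crossing the annulus transversally must have length at least its width. This gives an area/width trade-off of the form edges $\le k\cdot$width, with width $\le k$.

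The main obstacle is the first step: systoles crossing the cut system transversally can give many homotopy classes inside a pair of pants. Here I would use a charging/area argument. Each pants piece has boundary length $\le 3k$. Any edge inside it lies on a length-$k$ curve, which forces the piece to have diameter $O(k)$ and hence, by a planar isoperimetric-type bound, $O(k^2)$ triangles. Summing over the $O(g)$ pieces and annuli gives $O(k^2 g)$ edges, and tracking the constants should yield something like $966k^2g$.
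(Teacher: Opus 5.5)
There is a genuine gap. You sketch two strategies (covering the edges by $O(kg)$ systoles, and cutting into pants and bounding area) and complete neither, and the steps carrying the second one are false as stated. A maximal family of pairwise disjoint, non-homotopic systoles need not be a pants decomposition. Maximality is only among systoles, so if, for instance, the systoles pairwise intersect, the family is a single curve and its complement can have genus $g-1$. You therefore get neither $O(g)$ pieces of perimeter $O(k)$ nor a ``bounded combinatorial pattern'' for systoles inside a piece. Even in genuine pants, a length-$k$ curve through an interior edge may be homotopic to a boundary curve and never leave the piece, so your diameter bound is not justified either. In addition, the exchange argument making the $\gamma_e$ cross pairwise at most once can destroy $e\in\gamma_e$, and $|\Gamma|=O(kg)$ is never proved.

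More fundamentally, ``diameter $O(k)$, hence $O(k^2)$ triangles by a planar isoperimetric-type bound'' is false. Start from a triangle $abc$ and repeatedly insert a vertex into the triangle incident to the edge $ab$. This gives a triangulated disk of perimeter $3$, with every vertex adjacent to the boundary, and with arbitrarily many triangles. Radius and perimeter do not bound area by themselves. What is missing is a rigidity consequence of irreducibility that rules out redundant interior structure.

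In the paper this comes from the medial system of curves $\mathcal{C}$ of an edge-minimal subgraph $G\subseteq T$ of face-width $k$. This system is $2k$-irreducible, hence tight, hence in minimal position by Schrijver's theorem (Theorem~\ref{th:schrijver2}). So inside any disk its arcs have no monogons and cross pairwise at most once. The covering property is then used twice:
\begin{itemize}
\item The geodesics through consecutive edge pairs at each crossing, together with their pairwise weddings, form a filling family of curves of length at most $4k$. Of these, $3g-1$ still fill (Proposition~\ref{pr:filling-systoles} and Lemma~\ref{le:genus-size-filling-family}), cutting $S$ into disks of total perimeter $O(kg)$.
\item Every face of $\mathcal{C}$ lies within distance $k$ of the boundary of its disk, since a covering geodesic of length at most $2k$ through it must exit the disk.
\end{itemize}
Minimal position plus this radius bound gives $n(f)=O(k)\,p(f)$ through the treewidth bound of Hickingbotham, Illingworth, Mohar and Wood (Lemma~\ref{le:radius-perimeter-area-relation}). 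The trivial bound $n(f)\le p(f)^2/2$ would only give $O(k^2g^2)$, since face perimeters are controlled only in total. Summing over faces and using $|V(T)|\le\frac{2}{3}\crossnum(\mathcal{C})$ yields $966k^2g$; nothing in your sketch produces that constant.
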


The bound $O(k^2 g)$ in Theorem~\ref{th:main} is tight up to the value of the constant which we have not tried to optimize: intuitively it can be achieved by taking $g$ connected sums of tori resembling $k$ by $k$ grids. We refer to Melzer~\cite[Section~6.2]{melzer2019k} for precise descriptions of families of $k$-irreducible triangulations of size $\Theta(k^2g)$. One interesting follow-up question would be to determine the correct constant in some asymptotic regimes in $g$ and $k$.

Note that an immediate application of the Euler formula yields similar bounds for the number of vertices and faces. 

\subparagraph*{Systems of curves.} Our proof of Theorem~\ref{th:main} is actually a consequence of a stronger result on system of curves that we introduce now. We investigate $k$-irreducible triangulations from the powerful perspective initiated by (de Graaf and) Schrijver in a series of papers (see e.g., ~\cite{schrijver1991decomposition,Sch92,de1995characterizing,de1997decomposition}), which consists in encoding the metric structure of an embedded graph using a system of curves. Note that this approach was already implicit in Steinitz's proof of his aforementioned theorem, as explained in Chang~\cite{chang2018tightening}.

For the case of $k$-irreducible triangulations, this is done as follows. One first observes that instead of considering shortest non-contractible closed walks, one could consider instead shortest non-contractible closed curves crossing the triangulation exactly at the vertices. Such curves are commonly called \define{nooses} in structural graph theory. If one defines the length of those curves as this number of intersections, one sees that any non-contractible closed walk in a triangulation can be deformed to a non-contractible noose of the same length and vice-versa, see Figure~\ref{fig:noose}, left. The length of a shortest non-contractible noose is called the \define{face-width} (also called \define{representativity} in, e.g.~\cite{gao1996irreducible}), and therefore we can equivalently define $k$-irreducibility as having face-width at least $k$ and being irreducible for this property.

The \define{medial graph} $M$ of a triangulation $T$ of a surface $S$ is the embedded graph in $S$ which has a vertex in the middle of each edge of $T$ and in each face $f$ of $T$ we add edges between vertices of $M$ that belong to consecutive edges of $f$. See Figure~\ref{fig:noose}, right for an illustration. Note that $M$ is a simple $4$-regular graph and that the length of a noose $c$ for $T$ is equal to half the number of transverse intersections it has with $M$. Because $M$ is $4$-regular, it can be considered as a \define{system of curves} $\mathcal{C}=(c_1,\ldots,c_p)$ whose formal definition is postponed to Section~\ref{SS:system-of-curves}. The \define{crossing number} of a system of curves $\mathcal{C}$, denoted by $\crossnum(\mathcal{C})$ is its number of intersections, and the \define{length} of a curve $c$ transverse to $\mathcal{C}$ is the number of intersections of $c$ and $\mathcal{C}$, denoted by $\crossnum(c,\mathcal{C})$.

\begin{figure}
    \centering
    \includegraphics[height=3cm]{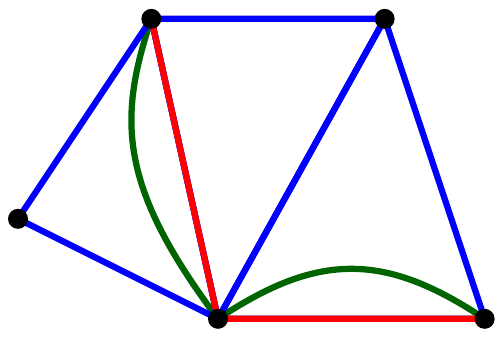}\hspace{0.5cm}\includegraphics[height=3cm]{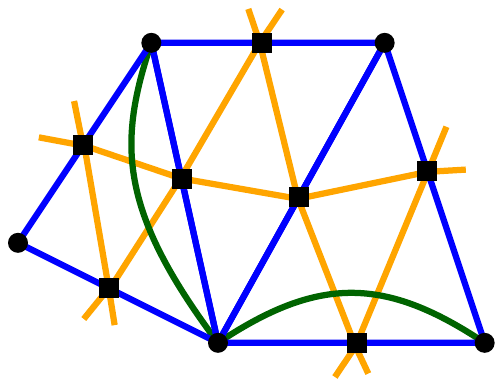}
    \caption{The triangulation in blue, a walk in red, its corresponding noose in green and the medial graph in orange.}
    \label{fig:noose}
\end{figure}

The strength of this perspective is that this allows to understand metric deformations of $T$ (e.g., edge contractions) under the lens of topological deformations (homotopies) of $\mathcal{C}$. For instance, it follows from the work of Schrijver~\cite{Sch92} that on an orientable surface, the system of curves corresponding to a minor-minimal graph of face-width $k$ is necessarily in \define{minimal position}: it contains no contractible curves and its crossing number $\crossnum(\mathcal{C})$ is minimized over all families $\mathcal{C}'$ homotopic to $\mathcal{C}$ (we refer to Section~\ref{S:prelim} for the background on homotopies and a precise statement of Theorem~\ref{th:schrijver2} by Schrijver).

A \define{smoothing} of a system of curves $\mathcal C$ is the system of curves obtained from $\mathcal C$ by removing a vertex $v$ and reconnecting the incident strands in one of two ways as in Figure~\ref{fig:Opening}. Note that a smoothing may change the number of components. 

\begin{figure}[ht!]
    \centering
	\includegraphics[height=2cm]{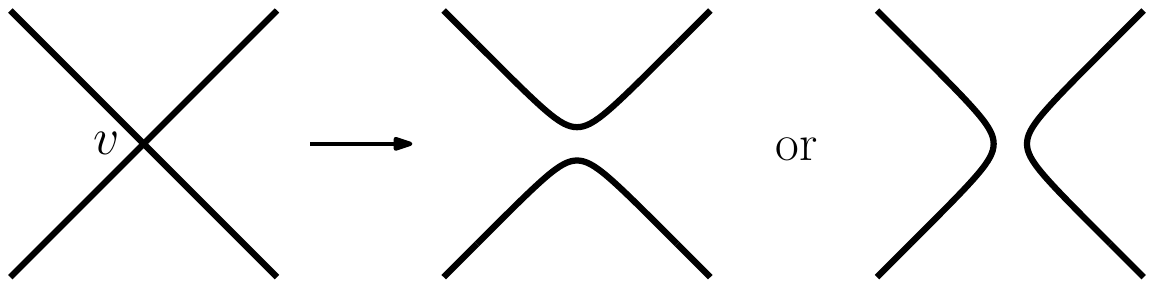}
	\caption{A smoothing at $v$}
	\label{fig:Opening}
\end{figure}

A \define{geodesic} is a closed curve transverse to $\mathcal{C}$ and of minimal length within its homotopy class. For a given system of curves $\mathcal{C}$, we say that a smoothing $\mathcal{C}'$ of $\mathcal{C}$ is \define{geodesically $k$-covered} if
\begin{itemize}
    \item there is a geodesic $c$ of length at most $k$ with respect to $\mathcal{C}$, and
    \item there is a curve $c'$ homotopic to $c$ and such that $\crossnum(c',\mathcal{C'})<\crossnum(c,\mathcal{C})$.
\end{itemize}
We say that a system of curves $\mathcal{C}$ on a surface $S$ is \define{geodesically $k$-covered} for some $k\ge 1$ if all its smoothings are geodesically $k$-covered and it does not have a connected component consisting of exactly one contractible curve. We now have all the ingredients to state our second theorem.

\begin{restatable}{theorem}{kirreduciblesystem}
\label{th:size-k-irreducible-system}
Let $\mathcal C$ be a geodesically $k$-covered system of curves on an orientable surface of genus $g$. The crossing number of $\mathcal C$ is at most $120 k^2 g = O(k^2g)$.
\end{restatable}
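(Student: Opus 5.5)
We want $\crossnum(\mathcal C)\le 120k^2g$. The idea is to charge every vertex of $\mathcal C$ to a short geodesic, and then show that few short geodesics suffice to cover all vertices.

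First, reduce to minimal position. If a smoothing at a vertex $v$ is geodesically $k$-covered, then $v$ is a crossing of the curves that lies on some geodesic $c$ of length at most $k$. Indeed, smoothing at $v$ only changes $\mathcal C$ near $v$, so the length of $c$ can drop only if $c$ passes through the region modified at $v$, after $c$ is homotoped to be tight. Conversely, a standard bigon/monogon argument should show that $\mathcal C$ is in minimal position. By Schrijver's results (Theorem~\ref{th:schrijver2}), a non-minimal system admits a homotopy reducing crossings, and this produces an empty monogon or bigon. Smoothing at a corner of such a bigon does not shorten any geodesic, since geodesic lengths are homotopy invariants computed in minimal position. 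This contradicts the hypothesis. With the absence of isolated contractible components, $\mathcal C$ is therefore in minimal position. Consequently, every vertex $v$ of $\mathcal C$ lies on (within distance $0$ of) a geodesic $\gamma_v$ of length at most $k$ that is non-contractible. The geodesic cannot be contractible, because contractible geodesics have length $0$.

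Second, select a small family of geodesics. Take a maximal family $\Gamma$ of pairwise non-homotopic geodesics of length at most $k$ with the following property: the $\mathcal C$-neighbourhoods of radius about $k$ around the chosen geodesics capture all vertices. More concretely, every vertex lies within distance $O(k)$, measured in the dual metric of $\mathcal C$, of a geodesic in a fixed maximal set of pairwise disjoint, non-homotopic short geodesics. Short geodesics that are pairwise disjoint and non-homotopic number at most $3g-3$, since they form part of a pants decomposition. Geodesics that intersect one another can be handled by the observation that two crossing geodesics of length at most $k$ lie in a subsurface whose boundary length is at most $2k$. Cutting along such boundaries, I would aim to decompose $S$ into $O(g)$ pieces. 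Each piece either is a short annulus, or has bounded topology and boundary length $O(k)$.

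Third, count the vertices in each piece. For an annulus around a short geodesic, every vertex in it lies on a geodesic homotopic to the core, of length at most $k$. Minimal position then forces every strand to cross the annulus essentially, and the number of parallel length-$k$ curves is bounded using an Euler/area argument. This gives $O(k^2)$ crossings. For a piece of bounded complexity with boundary of length $O(k)$, every vertex is on a short geodesic. Since the geodesics have length at most $k$ and there are $O(1)$ relevant homotopy classes per unit of topology, the crossings are bounded by $O(k)\cdot O(k)$. Summing over $O(g)$ pieces gives $O(k^2 g)$, and tracking the constants should yield $120$.

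The main obstacle is the second step. Short geodesics can overlap heavily and come in many homotopy classes, so the key lemma is to bound the number of vertices covered by homotopy classes of length at most $k$ within a region of fixed topology. I would first try an area argument. Each curve of $\mathcal C$, in minimal position, crosses each short geodesic at most $k$ times, and the total "area" is bounded using Euler characteristic via the systole relation $\text{area}\gtrsim \text{sys}^2$ reversed, that is, crossings $\lesssim k^2$ per unit of $|\chi|$.
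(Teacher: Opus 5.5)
Your first step is fine once you simply invoke Theorem~\ref{th:schrijver2}: a geodesically $k$-covered system is tight, hence in minimal position, and every vertex lies on a geodesic of length at most $k$. Steps 2 and 3, however, carry all the content, and they are only a plan. Several claims in them are unjustified or false.

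\begin{itemize}
\item You never construct the decomposition into $O(g)$ pieces of boundary length $O(k)$. A maximal family of pairwise disjoint short geodesics is in general only part of a pants decomposition, and short geodesics need not even be simple. The complementary regions, where all the difficulty lies, are handled only by the vague suggestion of cutting along boundaries of neighbourhoods of crossing pairs. That gives no control on how these boundaries interact or on how many pieces result. Note also that $O(g)$ pieces each of boundary $O(k)$ is more than the paper ever establishes.
\item In Step 3, the bound for non-disk pieces cannot come from boundary length. Cutting a torus along one short curve leaves an annulus containing closed curves of $\mathcal C$ parallel to the core that never meet the boundary.
\item The claim that every vertex in the annulus lies on a geodesic homotopic to the core is unjustified.
\item ``$O(1)$ relevant homotopy classes per unit of topology'' is false as stated. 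A torus with one horizontal and two vertical curves is geodesically $k$-covered for every $k\ge 3$, yet has $\Theta(k^2)$ primitive classes of length at most $k$. In any case, counting classes does not bound crossings.
\item Your closing ``reversed systolic inequality'' is a restatement of the theorem, not a tool.
\end{itemize}

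You are missing two ideas.

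First, the local hypothesis must be upgraded to a global \emph{filling} family of short curves, so that the pieces are disks. The paper shows that every \emph{corner} (pair of consecutive edges at a vertex) is traversed consecutively by a geodesic of length at most $k$. These geodesics, together with their pairwise weddings (length at most $2k$), meet every simple closed curve essentially. Lemma~\ref{le:genus-size-filling-family} then extracts at most $3g-1$ of them, so the faces $f$ of this family are disks with $\sum_f p(f)\le 4k(3g-1)$. There is no per-face control beyond this total, so a single face may have perimeter of order $kg$. The bigon-free bound $n(f)\le p(f)^2/2$ therefore only gives $O(k^2g^2)$, as in Theorem~\ref{th:vertices-number-systolic}.

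Second, the linear bound comes from Lemma~\ref{le:radius-perimeter-area-relation}: $n\le(6d+7)p$ for arcs in minimal position in a disk of depth $d$, proved via the treewidth bound in terms of the radius of the map graph. It is combined with $d\le k/2$, which holds because every face of $\mathcal C$ inside $f$ is crossed by a geodesic of length at most $k$ that must leave $f$. Summing $n(f)\le(3k+7)p(f)$ over the faces gives $120k^2g$.

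Your plan has no substitute for either ingredient.
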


The proof technique we use allows us to treat the simplest case of \emph{non-orientable surfaces}.
\begin{restatable}{theorem}{nonorientable}
\label{t:projectiveplane}
Any $k$-irreducible triangulation of the projective plane has at most $4k^2$ edges.
\end{restatable}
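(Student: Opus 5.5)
We work with the medial graph $M$ of a $k$-irreducible triangulation $T$ of the projective plane. As explained above, $M$ is a system of curves $\mathcal C$, and the face-width is half the minimal number of intersections of a non-contractible curve with $\mathcal C$. The projective plane has a single nontrivial free homotopy class of closed curves with odd self-intersection. Hence the non-contractible geodesics all lie in this one class, each has length $2k$ with respect to $\mathcal C$, and $k$-irreducibility says that every smoothing of $\mathcal C$ (which corresponds to contracting an edge of $T$, or is otherwise dominated by such a contraction) strictly decreases this length. The number of edges of $T$ equals the number of vertices of $M$, i.e.\ $\crossnum(\mathcal C)$. So the goal is $\crossnum(\mathcal C)\le 4k^2$.

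The plan is to pass to the universal cover, the sphere $S^2$ with the antipodal involution $\sigma$. The lift $\tilde{\mathcal C}$ is a $\sigma$-invariant system of curves with $2\crossnum(\mathcal C)$ crossings. A non-contractible noose lifts to a path from a point $x$ to $\sigma(x)$. First I would establish minimal position, using Schrijver's theorem (Theorem~\ref{th:schrijver2}) or its analogue in the projective plane, obtained via the cover. The conclusion I want is that no component of $\mathcal C$ is contractible, and that the curves lift to simple closed curves on $S^2$ pairwise crossing minimally.

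In the projective plane each component of $\mathcal C$ is then either a one-sided curve, which lifts to a single $\sigma$-invariant great-circle-like curve, or a two-sided curve, which is contractible and therefore excluded. It follows that $\mathcal C$ consists of $p$ one-sided curves in minimal position, any two crossing exactly once and each simple. So $\crossnum(\mathcal C)=\binom p2$: this is an arrangement of pseudolines.

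Next I relate $p$ to $k$. A non-contractible closed curve in general position crosses each one-sided curve an odd number of times, hence at least once. Thus the minimal length is at least $p$. Conversely, one can go around "near" a pseudoline and cross the others once each, giving length $p-1$ plus a parity adjustment. Since this length equals $2k$ (the crossings of a noose with $M$ number twice the noose length), $p\le 2k+1$. Then $\crossnum(\mathcal C)\le\binom{2k+1}{2}=k(2k+1)\le 4k^2$ for $k\ge1$.

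The main obstacle is the minimal position step: justifying that $k$-irreducibility forces $\mathcal C$ to have no contractible components and no removable bigons or monogons in the projective plane. Schrijver's result was stated for orientable surfaces. I would handle this directly: a bigon or monogon in $\mathcal C$ yields an edge of $T$ whose contraction (or a sequence of smoothings/Reidemeister moves) does not shorten any non-contractible noose, contradicting irreducibility. To see this, I would apply the standard fact that bigon removal does not increase geometric intersection with any transverse curve, so every smoothing at a bigon corner preserves the face-width. If this local argument becomes delicate, I would fall back on a cruder bound using only that each crossing lies on a shortest noose, which should still fit within the $4k^2$ slack.
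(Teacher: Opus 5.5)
\textbf{Gap: the medial graph of $T$ is not in minimal position.} Your argument breaks at the claim that the medial graph $\mathcal C$ of $T$ itself is in minimal position. At the crossing of $\mathcal C$ corresponding to an edge $e$ of $T$ there are two smoothings:
one merges the two faces of $\mathcal C$ coming from the endpoints of $e$, which is contracting $e$;
the other merges the two faces coming from the triangles incident to $e$, which is deleting $e$.

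$k$-irreducibility only says that contractions shorten some non-contractible noose. Deleting an edge of $T$ can leave the face-width unchanged, so half of the smoothings are uncontrolled. Your parenthetical ``or is otherwise dominated by such a contraction'' has no justification. In particular, the smoothing at a bigon corner that your minimal-position argument produces may be the deletion-type one, which gives no contradiction.

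The conclusion is in fact false. Your structure would give $|E(T)|=\crossnum(\mathcal C)=\binom p2$ with $p\le 2k$, which is what your own parity argument actually yields. But Melzer's $k$-irreducible triangulations of the projective plane, quoted at the end of the paper, have $k(k-1)+1$ vertices and hence $3k(k-1)$ edges. This exceeds $\binom{2k}{2}$ for every $k\ge 3$; for $k=3$ it is $18$, which is not of the form $\binom p2$ at all. So $\mathcal C$ need not be a pseudoline arrangement. Your ``cruder fallback'' (each crossing lies on a shortest noose) is just the hypothesis restated, not an argument. Passing to the universal cover $S^2$ does not help either: every curve there is contractible, and the paper stresses that Schrijver's theorem is not available on non-orientable surfaces.

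\textbf{Missing idea: the paper's reduction.}
First, take an inclusion-minimal spanning subgraph $G\subseteq T$ that still has face-width $k$. Then deletion-type smoothings of its medial graph also shorten some noose.
Second, cut the projective plane along a simple systole to get a disk of perimeter $4k$ crossed by $2k$ arcs. A monogon or bigon among the arcs would give a length-preserving smoothing, because every non-contractible curve meets the systole and so splits into boundary-to-boundary arcs. Two arcs that do not cross would produce a non-contractible curve shorter than $2k$. Hence the arcs form a pseudoline arrangement and $|E(G)|=\crossnum(\mathcal C)=\binom{2k}{2}$.
Third, since $G$ may have far fewer edges than $T$, one must convert this to a vertex bound. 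The absence of monogons and bigons means every face of the medial graph, in particular every vertex-face, has degree at least $3$, so $|V(T)|=|V(G)|\le\frac23\binom{2k}{2}$.
Finally, Euler's formula gives $|E(T)|=3|V(T)|-3\le 4k^2$.

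This last conversion is where the bound $4k^2$, rather than your $k(2k-1)$, comes from.
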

However, there are some subtleties in the case of general non-orientable surfaces that prevent us from adapting our proof. See Section~\ref{S:nonorientable} for a discussion.

Finally, we remark that our tools could also be used to obtain slight strengthening of Theorem~\ref{th:main} and Theorem~\ref{t:projectiveplane}, yielding $O(k^2g)$ upper bounds under the weaker hypothesis that there exists a geodesic\footnote{Here, a geodesic is defined similarly to the geodesics for systems of curves: it is a closed walk in the triangulation having minimal length within its homotopy class.} of length at most $k$ going through every edge.

\subparagraph*{Connections to systolic geometry.} Theorems~\ref{th:main} and~\ref{th:size-k-irreducible-system} provide discrete counterparts to results in systolic geometry. This subfield of differential geometry investigates topological spaces endowed with continuous metrics (typically Riemannian) and aims at understanding the structure and properties of their shortest non-contractible curves, called \define{systoles}. We refer to the book of Katz~\cite{katz2007systolic} for an introduction. 

The famed systolic inequality of Gromov~\cite{gromov1983filling} states that for any surface of genus $g$ and area $A$, the systole has length $O(\sqrt{A/g} \log g)$. This is known to be tight: Brooks~\cite{brooks1988injectivity} and Buser and Sarnak~\cite{buser1994period} provided families of hyperbolic surfaces achieving this bound. There are discrete analogues of these results for embedded graphs~\cite{hutchinson1988short,colin2015discrete,kowalick2019filling}, showing that the edge-width of a triangulation with $n$ triangles is similarly bounded by $O(\sqrt{n/g} \log g)$ and that this is asymptotically tight. Equivalently, this shows that any triangulation of edge-width $k$ has $\Omega(k^2g/\log^2 g)$ triangles\footnote{Note that the growth is sensibly different if one considers the number of vertices instead of the number of triangles, see Melzer~\cite[Section~2.2]{melzer2019k}}. Bounding the size of $k$-irreducible triangulations then amounts to looking for a \emph{reverse systolic inequality}, bounding how far from this lower bound a locally extremal object for the systolic inequality can be. From this perspective, it is noteworthy that the logarithmic factor, which is known to be required for the systolic inequality, disappears in our bounds in Theorems~\ref{th:main} and~\ref{th:size-k-irreducible-system}, even though our bounds are also tight.

We are not aware of any works investigating continuous versions of such a reverse systolic inequality. One difficulty is that the space of Riemannian metrics on a surface is not compact, and extremal surfaces for the systolic inequality in negative Euler characteristic are expected to exhibit singularities, see for example Gromov~\cite[pp.62-65]{gromov1983filling}. This can be addressed by adding additional assumptions on the curvature, restricting to hyperbolic surfaces as in Fortier Bourque and Rafi~\cite{bourque2021local}, or nonpositively curved surfaces as in Katz and Sabourau~\cite{katz2021systolically}. Since the discrete setting of triangulations does not suffer from these analytic issues, it provides a convenient testbed for the study of locally extremal surfaces. The framework of geodesic currents of Bonahon~\cite{bonahon1988geometry} could provide an interesting middle-ground between these two worlds, keeping a continuous flavour but with strong compactness properties.

\subparagraph*{Related works.} Many articles in computational geometry in the last twenty years have been targeted at understanding topologically meaningful curves on surfaces, from the perspectives of both algorithms and combinatorics. We refer to the dedicated chapter by \'Eric Colin de Verdi\`ere in the handbook of computational geometry~\cite{c-ctgs-18} for a survey of this active field of research. Our proof techniques deal specifically with families of curves in minimal position on surfaces, a problem for which polynomial-time algorithms have been recently provided by Chang and de Mesmay~\cite{chang2022tightening}, and Dubois~\cite{dubois2024making}. The study of systolic properties of surfaces through the lens of such systems of curves has been instrumental in recent works of Cossarini~\cite{cossarini2020discrete} and Cossarini and Sabourau~\cite{cossarini2023minimal}. The medial graph construction connects homotopy of curves with electrical moves, as studied in Chang and Erickson~\cite{chang2017untangling}, Chang, Cossarini and Erickson~\cite{chang2019lower} and Aranguri, Chang and Fridman~\cite{aranguri2022untangling}. Very recently, Delecroix, Fontaine and Lazarus~\cite{delecroix2025computation} have made algorithmic Schrijver's concept of \emph{kernel}~\cite{Sch92}, which generalizes our $k$-irreducible triangulations to metric structures which are extremal with respect to the entire set of lengths of non-contractible curves (the \emph{length spectrum}).

\subparagraph*{Organization of the paper.} After some preliminaries in Section~\ref{S:prelim}, we first explain how Theorem~\ref{th:size-k-irreducible-system} implies Theorem~\ref{th:main} in Section~\ref{S:irred}. Then the main ingredient is Proposition~\ref{pr:filling-systoles} proven in Section~\ref{S:results}. Namely given a system of curves that is geodesically $k$-covered, it shows the existence of $O(g)$ curves of length at most $2k$ filling $S$. This immediately yields a first bound (Theorem~\ref{th:vertices-number-systolic}), albeit one not linear in $g$. Section~\ref{S:linear} introduces additional combinatorial tools to obtain the tight bound of Theorem~\ref{th:size-k-irreducible-system}, and Section~\ref{S:nonorientable} discusses the case of non-orientable surfaces. 

\section{Preliminaries}\label{S:prelim}
In this section we provide the main definitions used throughout the paper. The central objects of study are topological surfaces and curves on them whose definitions are recalled in Section~\ref{SS:surfaces-and-curves}. A (discrete) geometry on a surface allowing to measure lengths can then be defined in two related ways: either by the means of an embedded graph also called combinatorial surface (see Section~\ref{SS:combinatorial-surfaces}), or using a system of curves as in Section~\ref{SS:system-of-curves}.

\subsection{Surfaces and curves}\label{SS:surfaces-and-curves}
 In this paper, we use the term \define{surface} for a closed orientable topological surface, except in Section~\ref{S:nonorientable} where we deal with non-orientable surfaces. It is always denoted by $S$ and is sometimes implicit. We refer to Massey~\cite{Mas91} for topological background.

\subparagraph*{Curves.}  A \define{curve} on a surface $S$ is a continuous map $c:[0,1]\rightarrow S$ and a \define{closed curve} on $S$ is a continuous map $c: \mathbb{S}^1 \to S$, where $\mathbb{S}^1$ is the one-dimensional circle $\mathbb{R} / \mathbb{Z}$. Since all the curves in this article are closed, we will often just write curve instead of closed curve. Two closed curves $c$ and $c'$ on $S$ are \define{(freely) homotopic} if there is a continuous map $h:[0,1] \times \mathbb{S}^1 \rightarrow S$ such that $h(0,t)=c(t)$ and $h(1,t)=c'(t)$. Being freely homotopic is an equivalence relation on closed curves on $S$ and we refer to \define{(free) homotopy classes} for an equivalence class of closed curves with respect to free homotopy. We always denote curves with latin letters and homotopy classes with greek letters. A closed curve is \define{contractible} if it is homotopic to a constant curve. A curve is \define{simple} if it does not self-intersect. A closed curve is \define{primitive} if it is not homotopic to the $k$-fold concatenation of a curve $c$ for some $k\ge 2$. In particular, a primitive curve is not contractible.

\subsection{Combinatorial surfaces}\label{SS:combinatorial-surfaces}
\subparagraph*{Combinatorial surfaces.} A \define{combinatorial surface} is an embedding $\phi: G \to S$ of a graph $G=(V,E)$, possibly with multiple edges and loops, in a surface $S$, such that $S \setminus \phi(G)$ is a disjoint union of open disks. These disks are called the \define{faces} of $G$. The counterclockwise sequence of arcs (i.e., oriented edges) in the boundary of a face is called a \define{face boundary} or \define{facial walk}. The \define{degree of a face} is the length of its facial walk.

Two combinatorial surfaces $\phi: G \to S$ and $\phi': G' \to S'$ are \define{isomorphic} if there exists a graph isomorphism $\theta: G \to G'$ and a homeomorphism $\psi: S \to S'$ such that $\phi' \circ \theta = \psi \circ \phi$. The isomorphism class of a combinatorial surface can be conveniently encoded by a \define{rotation system} describing the circular orderings of the (oriented) edges around each vertex. See Mohar and Thomassen~\cite{mt-gs-01} for more details on this structure. Most of the time, the embedding $\phi$ is implicit and we identify $G$ and the corresponding combinatorial surface.

Let $\phi: G \to S$ be a combinatorial surface. The \define{length (with respect to $G$)} of a curve $c$ in $S$ that crosses only the vertices of $G$ and not its edges is the number of times this curve passes through a vertex of $G$. It is denoted $\crossnum(c, G)$. The \define{systole} of $G$ is the minimum length over all non-contractible closed curves $c$ on $S$ passing only through vertices of $G$ of $\crossnum(c, G)$. We will use the common abuse of language to also call systole such a non-contractible closed curve of minimal length when it leads to no confusion.

\subparagraph*{Triangulation.} A combinatorial surface $(S,G)$ is a \define{triangulation} if $G$ is simple (no loops nor multiple edges)  and every face has degree $3$.
A triangulation $T$ on a surface $S$ is \define{$k$-irreducible} if
\begin{enumerate}
    \item the systole of $T$ is equal to $k$

    \item for every edge $e=(uv)$ of $T$, there is a non-contractible curve $c$ of length $k$ on $T$ crossing consecutively $u$ and $v$.
\end{enumerate}

\begin{remark}
In line with the introduction, we define our triangulations to be simplicial, but note that this is implied by $k$-irreducibility for $k \ge 3$. Indeed, if there is a loop or a bigon formed of two parallel edges, it has to be contractible and thus bounds a disk, but then the edges inside that disk cannot belong to a systole as there would be a shortcut, violating $k$-irreducibility.
\end{remark}

\subsection{Systems of curves}\label{SS:system-of-curves}

\subparagraph*{Systems of curves.}

A \define{system of curves} on a surface $S$ is a family $\mathcal C=(c_1,\ldots, c_p)$ of non-contractible closed curves on $S$ in \emph{general position}, i.e., they only cross transversely and at most two curves cross at a point.

An \define{intersection} of $\mathcal C$ is a pair $(i,s) \not= (j,t)$ such that $c_i(s) = c_j(t)$. Because a system of curves is in general position, the set of intersections is an isolated subset of points in $S$. The \define{crossing number of $\mathcal C$} is its number of intersections. Note that $\mathcal C$ defines a $4$-regular embedded graph on $S$ whose vertex set is the intersections of $\mathcal C$ and its edges are segments of the curves $c_i$. Conversely, a $4$-regular embedded graph $\phi: G \to S$ determines a system of curves by `going straight' at each crossing. We will freely use this identification and use the vocabulary of embedded graphs, i.e., vertices, edges and faces for a system of curves $\mathcal C$. Note however that in this context the embedded graph might not be a combinatorial surface because its complement could have topology. A system of curves is \define{connected} if the corresponding graph is connected.

Two systems of curves $\mathcal C=\{c_1,\ldots, c_p\}$ and $\mathcal C'=\{c'_1, \ldots, c'_p\}$ are \define{homotopic} if for all $i$ in $[1,p]$, $c_i$ is freely homotopic to $c'_i$. We say that a system of curves $\mathcal C$ is in \define{minimal position} if it contains no contractible curves and has the minimal crossing number over all systems of curves homotopic to $\mathcal C$. We say that two curves \define{intersect essentially} if they intersect at least once in minimal position.

\subparagraph*{Length of curves.} Let $\mathcal C$ be a system of curves on $S$ and $c$ be a curve transverse to $\mathcal C$. The \define{length} of $c$ with respect to $\mathcal C$ is the number of crossings of $c$ with $\mathcal C$. We recall that this quantity is denoted $\crossnum(c, \mathcal C)$. The \define{length} of a free homotopy class $\gamma$ is the minimal number of crossings of a representative $c$ of $\gamma$ with $\mathcal C$. We use the notation $\crossnum(\gamma, C)$ as a shortcut for $\min_{c \in \gamma} \crossnum(c, C)$. If the length of $c$ equals the length of its free homotopy class, we say that $c$ is a \define{geodesic} (equivalently $c$ is in minimal position with $\mathcal{C}$). For $\mathcal{C}$ in minimal position, we define the \define{systole} of $\mathcal C$ to be $\min_{\gamma} \crossnum(\gamma, \mathcal C)$ where $\gamma$ runs over all non-trivial free homotopy classes of closed curves in $S$. Here again, we sometimes also call a closed curve achieving this minimum length a systole. 
We say that a system of curves $\mathcal C$ in minimal position on a surface $S$ is \define{filling} if for any non-trivial homotopy class $\gamma$ we have $\crossnum(\gamma, \mathcal C) > 0$. Equivalently, a system of curves $\mathcal C$ in minimal position on a surface $S$ is filling if and only if it corresponds to a combinatorial surface (i.e. the complement of the curves is a disjoint union of topological disks).

\subparagraph*{Geodesically $k$-covered and $k$-irreducible systems of curves.} Recall from the introduction that a system of curves $\mathcal{C}$ on a surface $S$ is \define{geodesically $k$-covered} if all its smoothings are geodesically $k$-covered and it does not have a connected component consisting of exactly one contractible curve. We say that a system of curves is \define{$k$-irreducible} for some $k\ge 1$ if it is geodesically $k$-covered and every non-contractible closed curve $c$ on $S$ has length at least $k$. This mirrors the definition for triangulations.

A system of curves $\mathcal{C}$ is \define{tight} if it does not contain a contractible curve disjoint from the other curves and if for any smoothing $\mathcal C'$ of $\mathcal C$ there is a curve in minimal position with respect to $\mathcal C$ and $\mathcal C'$ which is shorter with respect to $\mathcal C'$ than with respect to $\mathcal C$.  The following is an (equivalent reformulation of) a notion introduced by Schrijver, who proved the following theorem:

\begin{theorem}[{\cite[Theorem~5]{schrijver1991decomposition}}]\label{th:schrijver2}
A system of curves on an orientable surface is tight if and only if it is made of primitive curves in minimal position.
\end{theorem}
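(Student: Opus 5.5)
The plan is to prove the two implications separately. The main tool is the universal cover $\tilde S$ and the known characterization of lengths via geometric intersection numbers. Lift $\mathcal C$ to a family of lines in $\tilde S$, taking $\tilde S$ to be the hyperbolic plane when $g\ge 2$ and the Euclidean plane for the torus. Two standard facts will be used throughout. First, a system is in minimal position if and only if it has no monogon or bigon, by Hass--Scott. Second, for $\mathcal C$ in minimal position, $\crossnum(\gamma,\mathcal C)=\sum_i i(\gamma,c_i)$ is the sum of geometric intersection numbers. In $\tilde S$ this means a path is geodesic if and only if it crosses each lifted line at most once, and its length is the number of lines separating its endpoints. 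One more observation is free: for any smoothing $\mathcal C'$ at $v$ and any class $\gamma$ we have $\crossnum(\gamma,\mathcal C')\le \crossnum(\gamma,\mathcal C)$. Indeed, a geodesic for $\mathcal C$ can be perturbed away from $v$ and then crosses $\mathcal C'$ no more often. So tightness says exactly that every smoothing strictly shortens some class.

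\emph{Tight $\Rightarrow$ primitive and minimal position (contrapositive).} Suppose first that $\mathcal C$ is not in minimal position. Then take an innermost monogon or bigon and smooth at one of its corners, choosing the smoothing that ``opens'' it. I would show that the resulting $\mathcal C'$ is obtained from $\mathcal C$ by the same sequence of length-preserving moves (Reidemeister II/III and monogon removal) that removes the bigon. I would then check that each such move preserves $\crossnum(\gamma,\cdot)$ for every class $\gamma$: a minimal curve can always be pushed across a triangle or an empty bigon without changing its number of crossings. Hence no class gets shorter, and $\mathcal C$ is not tight. Suppose instead that some curve is $c=d^m$ with $m\ge 2$, with everything in minimal position. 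Then $c$ has self-crossings, and I would choose one of them whose smoothing splits $c$ into curves homotopic to $d^a$ and $d^{m-a}$ (visible in an annular cover). By linearity of intersection numbers, $i(\gamma,d^m)=i(\gamma,d^a)+i(\gamma,d^{m-a})$. So once the new system is put in minimal position, every length is unchanged. Since lengths can only drop under smoothing, they are exactly preserved, and again $\mathcal C$ is not tight.

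\emph{Primitive and minimal position $\Rightarrow$ tight.} Fix a crossing $v$ of lifts $\ell_1,\ell_2$, which cross exactly once since the system is in minimal position. Fix one of the two smoothings; it merges two opposite quadrants $Q,Q'$ at $v$. I want a deck transformation $h$ whose axis enters through $Q$ and leaves through $Q'$. Equivalently, its two fixed points at infinity should lie in the two opposite arcs of $\partial\tilde S$ cut out by the interleaved endpoints of $\ell_1,\ell_2$ that bound $Q$ and $Q'$. Such $h$ exist because pairs of fixed points of hyperbolic elements are dense in pairs of boundary points. Its closed curve $\gamma$ then has a geodesic representative whose lift crosses $\ell_1$ and $\ell_2$ once each. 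I would then push that lift across the other lines so that it passes through the small neighbourhood of $v$ from $Q$ to $Q'$, keeping it crossing each line at most once. After smoothing at $v$, that curve can slip through the gap, avoiding both strands and saving $2$ crossings. Primitivity enters here: it guarantees that $\ell_1\ne\ell_2$ are not translates sharing endpoints, and that the lifted self-crossings of a single curve behave like crossings of distinct lines.

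The main obstacle is this last geometric step. I need to ensure that the representative of $\gamma$ actually passes through $v$'s quadrants, rather than merely crossing $\ell_1$ and $\ell_2$ somewhere far away, while staying geodesic with respect to both $\mathcal C$ and $\mathcal C'$. My first attempt would be to choose $h$ with both fixed points very close to the endpoints bounding $Q$ and $Q'$. This forces the axis to be squeezed near the corner $v$, since any line it must cross separates the same endpoints. The equality between ``lines separating endpoints'' and length then gives geodesicity. For the torus, the same argument would be run with straight lines in $\mathbb R^2$, choosing rational slopes in the appropriate open cone.
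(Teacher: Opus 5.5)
The paper gives no proof of this statement; it is imported from Schrijver. So your proposal has to stand on its own, and it has genuine gaps in both directions.

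\textbf{Tight $\Rightarrow$ primitive and minimal.} The monogon/bigon step fails on a closed surface. Hass--Scott only guarantee a \emph{singular} (immersed) monogon or bigon there, and an embedded one need not exist. The paper itself uses this kind of argument only inside a disk (Section~\ref{S:nonorientable}), where embedded ones are available. An innermost bigon in $\tilde S$ projects to an immersed bigon that may overlap itself. Smoothing its corner on $S$ smooths every lift of that corner at once, so no local disk argument applies; handling this is the real content of Schrijver's proof.

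Your proposed mechanism is also wrong even for embedded bigons:
\begin{itemize}
\item A smoothing changes homotopy classes (at a crossing of two distinct curves it merges them), so $\mathcal C'$ is not obtained from $\mathcal C$ by Reidemeister moves at all.
\item A II move is not length-preserving. Removing an empty bigon merges the two regions flanking it, which were at distance $2$, so lengths can only go \emph{down}, which is the wrong direction for you. They can drop strictly, since for a non-minimal system $\crossnum(\gamma,\mathcal C)$ may exceed its value $\sum_i i(\gamma,c_i)$ after untangling.
\end{itemize}
For an embedded innermost bigon one can check directly, by rerouting inside the bigon disk, that smoothing a corner so that the inner corner face merges with the opposite outer quadrant preserves all lengths. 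But that is only the easy case. Your non-primitive case is essentially fine, and you need not put $\mathcal C'$ in minimal position, since $\crossnum(\gamma,\mathcal C')\ge\sum_j i(\gamma,c'_j)$ holds for every system.

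\textbf{Primitive and minimal $\Rightarrow$ tight.} The step you flag as the main obstacle is the whole difficulty, and your heuristic does not settle it. Let $p_1,q_1$ and $p_2,q_2$ be the ends of $\ell_1,\ell_2$, in cyclic order $p_1,p_2,q_1,q_2$, with $Q$ facing $(p_1,p_2)$ and $Q'$ facing $(q_1,q_2)$. Consider a lift from $r\in(p_1,p_2)$ to $s\in(q_1,q_2)$ that passes the corner from $Q$ to $Q'$ and crosses each line at most once. It can only exist if no line separates $v$ from both $r$ and $s$.

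Suppose you put the fixed points near $p_2$ and $q_1$, which do ``bound $Q$ and $Q'$''. This fails as soon as some line crosses both rays $[v,p_2)$ and $[v,q_1)$, e.g.\ when the face at $v$ in the side quadrant between $Q$ and $Q'$ is a triangle. That line has ends $x\in(p_1,p_2)$ and $y\in(q_1,q_2)$, and it separates $v$ from every $r\in(x,p_2)$ and $s\in(q_1,y)$. Closeness to endpoints squeezes nothing, because ends of lifts are dense in $\partial\tilde S$.

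The choice that works is $r$ near $p_1$ and $s$ near $q_1$, the ends of the \emph{same} line. Even then, three things are missing:
\begin{itemize}
\item You must prove that, for $r,s$ close enough, no line has one end in $(p_1,r)$ and the other in $(q_1,q_2)$, nor one end in $(q_1,s)$ and the other in $(p_1,p_2)$.
\item This needs local finiteness together with the fact that lifts are uniform quasi-geodesics; otherwise lines hugging $\partial\tilde S$ could accumulate there.
\item You need an $h$-invariant such path, i.e.\ an actual closed curve, which requires a monotonicity argument. For instance: if $\ell$ separates $r$ from $s$, the ends of $\ell$ and $h\ell$ do not interleave, so these lines are disjoint, and the translates $h^nF$ of the face $F$ of $Q$ at $v$ switch sides of $\ell$ only once.
\end{itemize}
None of this is in the proposal.
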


It is immediate that a geodesically $k$-covered system of curves is tight, and therefore it is in minimal position. Conversely, a tight system of curves is geodesically $k$-covered for some $k$. Note also that a $k$-irreducible system of curves is necessarily filling. While a geodesically $k$-covered system of curves may not be filling, by cutting the surface along simple non-contractible closed curves that it misses, we can obtain a new surface of smaller genus in which it is filling, and the smaller genus will only improve our bounds. So without loss of generality we assume henceforth that geodesically $k$-covered systems of curves are filling.

\begin{remark}
In this article, we often alternate between different metric models. In order to minimize confusion, we use the following guidelines:
\begin{itemize}
    \item in a triangulation $T$, we consider curves which are walks on the edges of the triangulation, and their lengths are the number of edges that they use.
    \item in a combinatorial surface $G$, we consider curves which cross $G$ only at vertices, and their lengths are the number of vertices that they cross.
    \item in a system of curves $\mathcal{C}$, we consider curves which are in general position with respect to $\mathcal{C}$, and their lengths are their number of intersections with $\mathcal{C}$.
\end{itemize}
\end{remark} 

\section{From triangulations to systems of curves}\label{S:irred}
The goal of this section is to prove how Theorem~\ref{th:main} on $k$-irreducible triangulations follows from Theorem~\ref{th:size-k-irreducible-system} on geodesically $k$-covered systems of curves.

Let $T$ be a $k$-irreducible triangulation on $S$ with vertex set $V$. We first explain how to associate a system of curves to $T$, and we then study its properties. Let $G$ be a subgraph of $T$ with the same vertex set $V$ and minimal (with respect to edge inclusion) for the property that every non-contractible curve $c$ intersecting $G$ only at $V$ satisfies $\crossnum(c,G) \ge k$. Let $M$ be the medial graph of $G$: recall that it is the graph with a vertex at the midpoint of every edge of $G$ and two midpoints are connected by an edge of $M$ whenever their supporting edges in $G$ are consecutive in a face of $G$, see Figure~\ref{Medial}. The graph $M$ is $4$-regular and is thus naturally associated to a system of curves $\mathcal{C}$ on $S$. We call $\mathcal C$ a \define{system of curves associated to $T$} (note that there can be many of them, depending on the choice of $G$). As proven in~\cite{Sch92}, for any homotopy class $\gamma$ of curves in $S$ we have $\crossnum(\gamma, \mathcal C) = 2 \crossnum(\gamma, G)$.

\begin{figure}
    \centering
    \includegraphics[height=3.5cm]{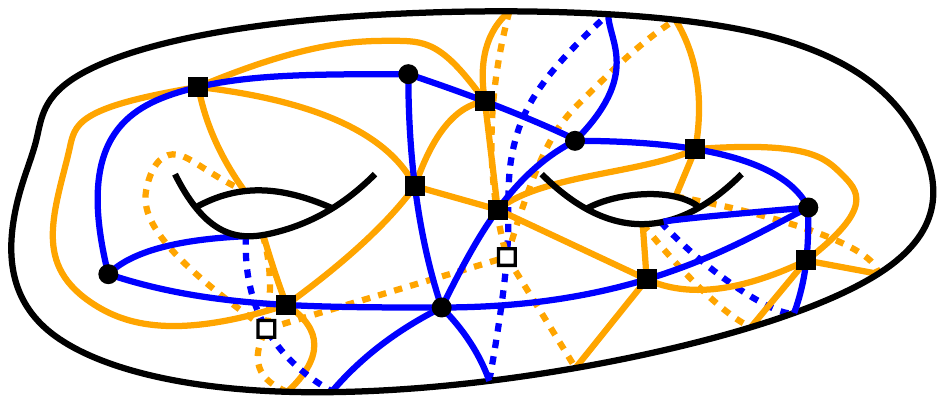}
    \caption{A graph $G$ in blue on a genus $2$ surface and its medial graph in orange}
    \label{Medial}
\end{figure}

\begin{proposition}
\label{pr:irreducible-triangulation-system}
Let $T$ be a $k$-irreducible triangulation on $S$. Then any system of curves associated to $T$ is $2k$-irreducible and in minimal position.
\end{proposition}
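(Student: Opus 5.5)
Let $\mathcal C$ be the system of curves associated to $T$ via a minimal subgraph $G\subseteq T$. The argument has three steps: first the length bound, then minimal position via Theorem~\ref{th:schrijver2}, and finally geodesic $2k$-covering of every smoothing.

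\textbf{Step 1: systole.} By the identity $\crossnum(\gamma,\mathcal C)=2\crossnum(\gamma,G)$ from~\cite{Sch92}, every nontrivial class has length at least $2k$ with respect to $\mathcal C$, since $G$ was chosen so that all non-contractible nooses have length at least $k$. Conversely, $G$ has the same vertex set as $T$, so nooses for $T$ are nooses for $G$. Hence the systole of $G$, and thus of $T$, is exactly $k$, and the systole of $\mathcal C$ is $2k$.

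\textbf{Step 2: reduction to a statement about smoothings.} By Theorem~\ref{th:schrijver2}, minimal position will follow from tightness, and tightness follows from being geodesically $2k$-covered. Moreover, $\mathcal C$ has no isolated contractible component: such a component would be the medial curve around a degree-one vertex or similar configuration, and then some edge would be removable from $G$. So everything reduces to the following claim: \emph{for each crossing $v$ of $\mathcal C$ and each of the two smoothings $\mathcal C'$ at $v$, there is a geodesic $c$ of length at most $2k$ and a homotopic curve $c'$ with $\crossnum(c',\mathcal C')<\crossnum(c,\mathcal C)$.}

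A crossing $v$ of $\mathcal C$ is the midpoint of an edge $e=uw$ of $G$. The two smoothings correspond to the two medial operations on $e$:
\begin{itemize}
\item \emph{deleting $e$}, which merges the two faces of $G$ on either side of $e$;
\item \emph{contracting $e$}, which merges $u$ and $w$.
\end{itemize}
The medial graph of $G\setminus e$ (respectively $G/e$) is exactly $\mathcal C'$, possibly up to removing degenerate trivial pieces, and the length identity $\crossnum(\gamma,\cdot)=2\crossnum(\gamma,\cdot)$ applies to these graphs as well.

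\textbf{Deletion smoothing.} By minimality of $G$, the graph $G\setminus e$ admits a non-contractible noose $c_0$ of length at most $k-1$. Such a noose must pass through the merged face, otherwise it would already be a short noose in $G$. Take $\gamma$ to be its homotopy class. Then
\[
\crossnum(\gamma,\mathcal C')\le 2(k-1)<2k\le\crossnum(\gamma,\mathcal C).
\]
Also $\crossnum(\gamma,\mathcal C)\le 2k$, because rerouting $c_0$ through $u$ or $w$ costs at most one extra vertex. So a $\mathcal C$-geodesic $c$ in $\gamma$ has length at most $2k$, and a $\mathcal C'$-geodesic $c'$ is strictly shorter, as required.

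\textbf{Contraction smoothing (main obstacle).} Here I use $k$-irreducibility of $T$: there is a systole noose $c$ of $T$ of length $k$ passing consecutively through $u$ and $w$, i.e.\ crossing the face or edge between them. Although $c$ is a noose for $T$, it is also a noose for $G$, with the same vertex count. Since the systole of $G$ is $k$, the curve $c$ is a geodesic of $G$, so its class $\gamma$ has $\mathcal C$-length exactly $2k$. After contracting $e$, the curve $c$ visits the merged vertex once instead of visiting $u$ and $w$ separately, giving a noose of length $k-1$ in $G/e$. Hence $\crossnum(\gamma,\mathcal C')\le 2k-2<2k$.

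The delicate points are in this last case:
\begin{itemize}
\item one must check that $c$ really crosses $u$ and $w$ consecutively in $G$ and not merely in $T$; this is why the statement requires $u,w$ consecutive along $c$, with the segment between them lying in a face of $G$ adjacent to $e$;
\item one must check that the smoothing identification with $G/e$ is compatible with the medial construction on a non-cellular $G$. This I would handle by working locally near $v$ and computing intersection numbers directly, without relying on cellularity.
\end{itemize}
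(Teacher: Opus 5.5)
Your proposal is correct and follows the paper's proof. The steps match: the $2k$ lower bound comes from Schrijver's identity $\crossnum(\gamma,\mathcal C)=2\crossnum(\gamma,G)$, the two smoothings at the midpoint of $e$ are identified with deleting or contracting $e$, deletion is handled by the minimality of $G$ and contraction by a length-$k$ systole of $T$ through $u,w$, and minimal position then follows from Theorem~\ref{th:schrijver2}.

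You also make explicit the bound $\crossnum(\gamma,\mathcal C)\le 2k$ in the deletion case, which the definition of a geodesically covered smoothing requires and the paper leaves implicit. Your argument that $\mathcal C$ has no isolated contractible component (the ``degree-one vertex'' remark) is hand-wavy. The clean reason, which the paper uses, is that $G$ is connected, so its medial graph $\mathcal C$ is connected. The paper asserts this without proof; one justification is that a component of $G$ lying in a disk could be bypassed by the systoles through its vertices, shortening them below $k$.
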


\begin{proof}
Let us denote as before by $G$ an (edge inclusion-wise) minimal subgraph of $T$ such that the length of a shortest non-contractible noose on $G$ is still $k$, and by $\mathcal C$ the system of curves associated to $G$. As $\crossnum(c,G)\ge k$ for every non-contractible curve $c$, $\crossnum(\gamma, \mathcal{C}) = 2 \crossnum(\gamma,G) \ge 2k$ for every non-trivial homotopy class $\gamma$.

Let $\mathcal{C}'$ be a smoothing of $\mathcal{C}$. As shown in~\cite{Sch92}, a smoothing of $\mathcal C$ either corresponds to an edge deletion or an edge contraction in $G$. We work separately on these two cases.

\begin{enumerate}
    \item Assume that $\mathcal{C}'$ is obtained from $\mathcal{C}$ by a smoothing corresponding to deleting the edge $e$. Let $G'=G\setminus \{e\}$. Then by minimality of $G$ in its definition, there is a non-contractible curve $c$ such that $\crossnum(c, G')<k$. Thus $\crossnum(\gamma, \mathcal{C}')\le 2\cdot \crossnum(c, G')<2k$ where $\gamma$ is the homotopy class of $c$.
    
    \item Assume that $\mathcal{C}'$ is obtained from $\mathcal{C}$ by a smoothing corresponding to contracting the edge $e=(uv)$. As $T$ is a $k$-irreducible triangulation, there is a systole $c$ of $T$ going through $u$ and $v$ consecutively. Thus this systole is also a systole of $G$ with the same property and is also a systole of $\mathcal{C}$ crossing two consecutive edges of $\mathcal{C}$ around the vertex $v(e)$ (see Figure \ref{fig:edge-contraction}). Thus, the smoothing reduces the length of $c$ and $\crossnum(\gamma, \mathcal{C}')<2k$ where $\gamma$ denotes the homotopy class of $c$.
\end{enumerate}

Since $G$ is connected, so is $\mathcal{C}$, and thus it does not have a connected component consisting of exactly one contractible curve. Therefore the system of curves $\mathcal{C}$ is $2k$-irreducible. It follows from Theorem~\ref{th:schrijver2} that it is in minimal position.
\end{proof}

\begin{figure}
    \centering
    \includegraphics{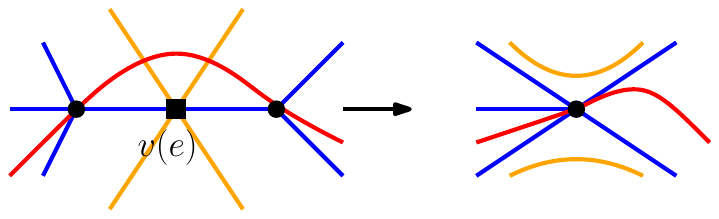}
    \caption{The contraction of an edge in $G$. The blue graph is $G$, the orange one is $\mathcal C$ and the red curve is $c$.}
    \label{fig:edge-contraction}
\end{figure}

We are now ready to prove Theorem~\ref{th:main} assuming Theorem~\ref{th:size-k-irreducible-system}. We restate it for convenience.

\main*

\begin{proof}
As before, let $G$ be a minimal subgraph of $T$ such that the length of a shortest non-contractible noose on $G$ is $k$, and let $\mathcal C$ be its associated system of curves. By Proposition~\ref{pr:irreducible-triangulation-system}, any system of curves associated to $T$ is a $2k$-irreducible system of curves, and thus it is geodesically $2k$-covered. Thus by Theorem~\ref{th:size-k-irreducible-system}, the crossing number $\crossnum(\mathcal{C})$ of $\mathcal{C}$ is at most $120(2k)^2g=480k^2g$.

Now let $M$ be the medial graph of $G$ from which the system of curves $\mathcal{C}$ is built. It is such that $|V(M)| = \crossnum(\mathcal{C})$ and because it is quartic $|E(M)| = 2 \crossnum(\mathcal{C})$. The graph $M$ is face bipartite and the classes of the bipartition are in bijection with the vertices $V(G)$ and the faces $F(G)$ of $G$. These bijections preserve the degree and hence
$\sum_{v \in V(G)} \deg_G(v) = |E(M)| = 2 \crossnum(\mathcal{C})$. Now, by Proposition~\ref{pr:irreducible-triangulation-system}, $\mathcal{C}$ is in minimal position. In particular, $\mathcal{C}$ has neither monogon nor bigon. In other words, each face of $M$ has degree at least $3$. Thus $\sum_{v \in V(G)} \deg_G(v) \ge 3 |V(G)|$. We conclude that $|V(T)| = |V(G)| \le \frac{2}{3} \crossnum(\mathcal{C}) \le \frac{2}{3} 480 k^2 g = 320 k^2 g$,  and $|E(T)| \leq 3|V(T)|+6g-6 \leq 966 k^2 g$.
\end{proof}

\section{Short filling systems for geodesically \boldmath\texorpdfstring{$k$}{k}-covered systems of curves}\label{S:results}

The main result of this section is the following proposition, which extracts from the property of being geodesically $k$-covered a more topological property: the existence of a filling family of short curves.

\begin{proposition}
\label{pr:filling-systoles}
Let $\mathcal C$ be a geodesically $k$-covered system of curves. Then there is a set of curves of lengths at most $2k$ with respect to $\mathcal C$ which is in minimal position and filling.
\end{proposition}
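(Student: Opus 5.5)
The plan is to collect, for every vertex $v$ of $\mathcal C$ and each of the two smoothings at $v$, a witnessing geodesic, and then show that the collection fills $S$. By hypothesis, for each smoothing $\mathcal C'$ at $v$ there is a geodesic $c$ with $\crossnum(c,\mathcal C)\le k$ and a homotopic curve $c'$ with $\crossnum(c',\mathcal C')<\crossnum(c,\mathcal C)$. First, I would check that such a $c$ can be taken to pass through $v$ in the ``right way''. Away from $v$, $\mathcal C$ and $\mathcal C'$ coincide, so a curve avoiding a small disk around $v$ has the same length for both. Pulling $c'$ back to $\mathcal C$ changes its length only near $v$; since $c$ is a geodesic of $\mathcal C$, some strand of $c'$ must cross the region near $v$ where the smoothing opened up. Pushing that strand back across $v$ costs at most $2$ per strand, which gives a geodesic for $\mathcal C$ through (a neighbourhood of) $v$ of length at most $k+2$, or at most $2k$ if we argue more crudely. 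The target is the following: for each vertex $v$ there is a curve $c_v$ of length at most $2k$, in minimal position with $\mathcal C$, passing through the two opposite corners of $v$ that the smoothing merges.

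Second, I would show that the family $\{c_v\}$ fills $S$. It suffices to show that every face of the arrangement formed by $\mathcal C\cup\{c_v\}$ that is not a disk is impossible. Equivalently, every non-trivial class $\gamma$ with $\crossnum(\gamma,\mathcal C)>0$ (recall $\mathcal C$ is filling) must meet some $c_v$ essentially. The approach is this: take a geodesic $d$ of $\gamma$ disjoint from all $c_v$. Since $\mathcal C$ fills, $d$ crosses some edge of $\mathcal C$, so it enters a face $f$ of $\mathcal C$. The curves $c_v$ at the corners of $f$ pass through those corners and cut $f$ into pieces that separate adjacent edges of $f$. I would argue that $d$ cannot cross $\mathcal C$ without crossing some $c_v$, so $d$ is disjoint from $\mathcal C$ and hence contractible. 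This is plausible because at each vertex both smoothings give curves, which together cut off all four corners.

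Third, we need the family to be in minimal position. We replace each $c_v$ by a geodesic representative with respect to a fixed hyperbolic or auxiliary metric, or simply apply a tightening (bigon and monogon removal). This does not increase the crossing with $\mathcal C$, since the $c_v$ are already geodesics. Filling is preserved because it is a homotopy-invariant property of the collection of classes.

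The main obstacle is the second step, namely turning the local picture at vertices into a global filling statement. The minimal-position homotopies can move the curves away from the vertices, so the argument has to be run homotopically. Concretely, one must show that a class disjoint from all $c_v$ has zero geometric intersection with $\mathcal C$. I would try to do this with bigon criteria: put $d$, $\mathcal C$ and the $c_v$ simultaneously in minimal position, then derive a contradiction from a strand of $d$ entering a face of $\mathcal C$ and leaving through an edge without crossing the corner curves.
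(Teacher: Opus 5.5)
\paragraph{Where the argument fails.} Your second step does not go through, and it is exactly where the paper has to do real work. Your reasoning is that a geodesic $d$ missing every corner geodesic cannot cross $\mathcal{C}$ (``together they cut off all four corners''). That is a purely local argument. Cutting off corners does not stop $d$ from crossing the middle of an edge $e=(u,v)$ between the corner arcs. Since the corner geodesics are only determined up to homotopy, the question has to be settled in the universal cover, through endpoints at infinity.

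Let $\tilde{c}_1\supset\tilde{e}$, and let $\tilde{c}_2,\tilde{c}_3$ be the lines crossing $\tilde{c}_1$ at $\tilde{u},\tilde{v}$.
\begin{itemize}
\item If $\tilde{c}_2$ and $\tilde{c}_3$ cross or share their endpoints, then $\tilde{d}$ is indeed forced to meet one of the four corner geodesics at $\tilde{u},\tilde{v}$.
\item If they are disjoint with distinct endpoints, nothing prevents the corner geodesics at $\tilde{u}$, after crossing $\tilde{e}$, from running off to infinity between $\tilde{c}_2$ and $\tilde{d}$. Likewise those at $\tilde{v}$ can run off between $\tilde{d}$ and $\tilde{c}_3$. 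This leaves a channel through which $\tilde{d}$ crosses $\tilde{e}$ while meeting none of them.
\end{itemize}
Your proposal has no argument excluding this configuration. Outside the torus case, the paper never claims that the corner geodesics of length at most $k$ fill.

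\paragraph{The missing idea: weddings.} The corner geodesic $s_1$ at $u$ and the corner geodesic $s_4$ at $v$ both pass through the face of $\mathcal{C}$ on one side of $e$. Resolving them there into a single closed curve $a\cdot s_1\cdot a'\cdot s_4$, with one added crossing, gives a curve of length at most $2k$. Its two lifts through that face cross $\tilde{c}_1,\tilde{c}_2$ and $\tilde{c}_1,\tilde{c}_3$ the way $\tilde{s}_1$ and $\tilde{s}_4$ do. Hence a line crossing $\tilde{e}$ must meet one of them. It is precisely the disjointness of $\tilde{c}_2$ and $\tilde{c}_3$ (the three lines acting as barriers) that ensures this wedding is in minimal position. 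This is where the $2k$ in the statement comes from.

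\paragraph{Step~1.} In your write-up, $2k$ only appears as slack in Step~1, and that step is itself muddled. A curve in minimal position with $\mathcal{C}$ in a class $\gamma$ has length $\crossnum(\gamma,\mathcal{C})\le k$. Rerouting the corridor strands of $c'$ through a corner does not by itself produce a $\mathcal{C}$-geodesic. What you need there is only this: for any two consecutive edges $e,e'$ at a vertex, some geodesic of length at most $k$ crosses $e$ and $e'$ consecutively.

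The torus should be treated separately: there, two essentially intersecting curves already fill.
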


The proof of this proposition relies partially on the notion of universal cover. We refer to Massey~\cite[Chapter~5]{Mas91} for an introduction to this classical object in algebraic topology and to Farb and Margalit~\cite[Section~1.2]{farb2011primer} for the specifics in the case of curves on surfaces. Given a surface $S$ different from the sphere, its \emph{universal cover} $\tilde{S}$ is homeomorphic to the plane $\mathbb{R}^2$. Given a curve $c: \mathbb{S}^1 \to S$ in a surface $S$, a \emph{lift} $\tilde{c}: \mathbb{R} \to \tilde{S}$ of $c$ is a map to the universal cover $\tilde{S}$ that commutes with the projections $\mathbb{R} \to \mathbb{S}^1$ and $\tilde{S} \to S$. When the lift $\tilde{c}$ is simple we call it a \emph{line}.

\begin{proof}
We will build a set of curves of lengths at most $2k$ made of geodesics of length at most $k$ and some curves that are obtained by concatenation of two of these, and then show that these curves are filling.

The following claim is standard in geometry of curves on surfaces. As it is used many times throughout our proof, we provide a standalone proof.

\begin{claim}\label{cl:intersections-cover}
Let $c$ and $c'$ be non-contractible curves on $S$ in minimal position. Then any two lifts $\tilde{c}$ and $\tilde{c'}$ in the universal cover $\tilde{S}$ are lines that intersect at most once.
\end{claim}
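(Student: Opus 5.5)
The claim will be derived from the bigon criterion, together with the fact that the universal cover $\tilde S$ is a plane. The plan has two parts. First, a lift of a curve in minimal position is a simple line. Second, two such lines can meet at most once, since otherwise a bigon appears and contradicts minimality.

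\emph{Lifts are lines.} Let $c$ be non-contractible and let $\tilde c:\mathbb{R}\to\tilde S$ be a lift. Since $c$ is non-contractible, the deck transformation $\tau$ with $\tilde c(t+1)=\tau\tilde c(t)$ is nontrivial. On a closed orientable surface of genus $g\ge 1$, such a $\tau$ acts freely, has infinite order, and is properly discontinuous, so $\tilde c$ is proper. Suppose $\tilde c(s)=\tilde c(t)$ with $s<t$. Take the pair with $t-s$ minimal and choose $s,t$ to be an innermost self-intersection. Then $\tilde c|_{[s,t]}$ is a simple closed loop in the plane and bounds a disk $D$ by the Jordan--Schoenflies theorem. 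Projecting $D$ to $S$ gives a monogon, or more precisely an immersed monogon of $c$. The classical criterion (Hass--Scott) states that a curve without immersed monogons or bigons in this sense is in minimal position, and conversely that a curve in minimal position has none. This rules out the loop, so $\tilde c$ is simple.

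\emph{Two lines meet at most once.} Suppose $\tilde c$ and $\tilde c'$ meet at two points $x$ and $y$. Consider the arcs $\alpha\subset\tilde c$ and $\alpha'\subset\tilde c'$ between them, and choose $x,y$ so that the arcs are innermost. Then $\alpha\cup\alpha'$ is a simple closed curve bounding a disk, that is, a lifted bigon. Because $\tilde c,\tilde c'$ are proper lines and the disk is compact, only finitely many other lifts of $c,c'$ enter it. An innermost-disk argument, using lifts of either curve, then produces an innermost lifted monogon or bigon.

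The key step is to push this lifted bigon down and derive a contradiction with minimal position. Two situations arise.

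\begin{itemize}
\item If the interior of the innermost disk meets no other lifts of $c,c'$ transversally, the covering map is injective on the disk, because distinct translates of the disk interior are disjoint by innermostness. The disk therefore projects to an embedded bigon, and a homotopy across it removes two crossings. This contradicts minimality.
\item Otherwise the disk only gives an immersed bigon on $S$. In that case I would invoke the Hass--Scott singular bigon criterion, or alternatively Schrijver's characterization through Theorem~\ref{th:schrijver2}.
\end{itemize}

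If $c=c'$ (up to different lifts), the same argument applies to self-intersections. The case of two lifts of the same curve that share an endpoint at infinity is harmless: such lines are distinct translates related by a power of $\tau$, and they either coincide or are disjoint.

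The hardest point is the passage from an immersed bigon in $\tilde S$ to a reduction of the crossing number downstairs. I would cite the standard result (Hass--Scott; Farb--Margalit, Section~1.2.4) that curves in minimal position have no singular bigons, instead of reproving it.
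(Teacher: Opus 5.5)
Your route is the paper's: a self-crossing of a lift, or two crossings between two lifts, bounds a monogon or bigon in $\tilde S$, and this should contradict minimal position. The paper simply asserts that the homotopy across the lifted disk projects to a crossing-reducing homotopy. You rightly observe that the projected disk may only be immersed and that this is the whole difficulty, but you only defer it to citations that do not give it.

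\begin{itemize}
\item Hass--Scott prove only that \emph{excess} intersection forces a singular monogon or bigon (your first sentence). The ``conversely'' that both your monogon and bigon steps need is not their theorem.
\item The bigon criterion in Farb--Margalit concerns simple curves and embedded bigons.
\item Theorem~\ref{th:schrijver2} only relates tightness to primitivity and minimal position.
\end{itemize}

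Worse, the converse is false in the generality you state it, since you explicitly include two lifts of the same curve. Let $a$ be a simple non-contractible curve, and let $c\simeq a^2$ wind twice around a thin annulus about $a$, with its single unavoidable self-crossing. In the strip covering this annulus, one lift of $c$ is the graph of $y=\epsilon\sin(\pi x)$. Its image under the deck transformation of $a$ is the graph of $y=-\epsilon\sin(\pi x)$. These are two lifts of a curve in minimal position crossing at every integer. So your remark that lifts of one curve sharing an endpoint differ by a power of $\tau$ and either coincide or are disjoint is wrong: they differ by a power of the primitive root of $\tau$. Your embedded case also breaks here. The bigons between consecutive crossings are empty and have embedded interiors, but both corners project to the same crossing, so there is nothing to remove.

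Hence you must assume primitivity whenever two lifts of the same curve are compared. This is how the paper uses the claim for curves of $\mathcal C$, which are primitive by Theorem~\ref{th:schrijver2}. You also need a genuine argument for the immersed case. Here is one that avoids bigon surgery altogether.
\begin{itemize}
\item Crossings on $S$ correspond to deck-orbits of crossing points in $\tilde S$, each lying on a well-defined pair of lifts (or on a single lift).
\item Two lifts with linked endpoints on $\partial\tilde S$ must cross, and no nontrivial deck transformation preserves such a pair. So the crossing number is at least the number of deck-orbits of linked pairs of lifts.
\item Slightly perturbed geodesics for a hyperbolic metric (flat for the torus, with ``linked'' replaced by ``non-parallel'') cross exactly once per linked pair and nowhere else. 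So this bound is the minimum.
\item A system in minimal position therefore has no other crossing, i.e.\ every lift is simple and any two lifts cross at most once.
\end{itemize}
Primitivity is used precisely to guarantee that distinct lifts of one curve never share both endpoints, which is what fails for $a^2$.
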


\begin{proof}
\renewcommand\qedsymbol{$\lrcorner$}
Assume that the lift $\tilde{c}$ has a self-intersection. Then, by the Jordan curve theorem in $\tilde{S}$, it forms a monogon: there is a disk in $\tilde{S}$ bounded by a subline of $\tilde{c}$. By homotopy, one can remove this monogon contradicting the fact that $c$ was in minimal position on $S$. Therefore, $\tilde{c}$ and $\tilde{c'}$ are lines in $\tilde{S}$.

Let us assume for contradiction that two lifts $\tilde{c}$ and $\tilde{c'}$ intersect twice. Then, again by the Jordan curve theorem, they form a bigon: there is a disk in $\tilde{S}$ bounded by one subline $\alpha$ of $\tilde{c}$ and one subline $\beta$ of $\tilde{c'}$. Now, the homotopy moving $\alpha$ to $\beta$ projects to a homotopy on the surface, and leads to a new pair of curves intersecting less. This contradicts the hypothesis that $c$ and $c'$ were in minimal position.
\end{proof}

The next claim is the first step in defining the curves we will work with.

\begin{claim}
\label{cl:filling-vertex-property}
Let $v$ be a crossing of $\mathcal C$ and $e$, $e'$ two edges adjacent to $v$ that are consecutive for the counter-clockwise ordering of edges around $v$. Then there is a geodesic of length at most $k$ that passes consecutively through $e$ and $e'$.
\end{claim}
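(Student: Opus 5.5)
We use the hypothesis that $\mathcal C$ is geodesically $k$-covered, applied to the smoothing $\mathcal C'$ of $\mathcal C$ at $v$ which reconnects $e$ with $e'$. Label the four half-edges at $v$ as $e,e',f,f'$ in counter-clockwise order. Then $e$ and $f$ lie on one strand through $v$, and $e'$ and $f'$ lie on the other. There are exactly two smoothings at $v$: one pairs $(e,e')$ with $(f,f')$, the other pairs $(e,f')$ with $(e',f)$. We take $\mathcal C'$ to be the first one. By the definition of being geodesically $k$-covered, there is a geodesic $c$ for $\mathcal C$ of length at most $k$, together with a curve $c'$ homotopic to $c$ such that $\crossnum(c',\mathcal C')<\crossnum(c,\mathcal C)$. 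The goal is to show that some geodesic of length at most $k$ in the homotopy class of $c$ passes consecutively through $e$ and $e'$, that is, through the corner of the face between $e$ and $e'$ at $v$.

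First, we make $c'$ a geodesic for $\mathcal C'$ and place it in a small neighbourhood $D$ of $v$. Outside $D$, the systems $\mathcal C$ and $\mathcal C'$ coincide. Inside $D$, the system $\mathcal C'$ consists of two disjoint arcs: one joins $e$ to $e'$, the other joins $f$ to $f'$. The curve $c'$ meets $D$ in finitely many arcs, and after isotopy we may assume each such arc crosses at most one of the two arcs of $\mathcal C'$, at most once. We now push $c'$ back to $\mathcal C$ by un-smoothing at $v$. Each piece of $c'$ inside $D$ can be rerouted so that it crosses $\mathcal C$ at most as often as it crossed $\mathcal C'$, except for pieces that separate the corner $(e,e')$ or the corner $(f,f')$ from the other two half-edges. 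Such a piece has to cross one extra strand, or else go through the vertex $v$. Let $c''$ be the curve obtained by rerouting every such piece through $v$, so that it passes consecutively through $e$ and $e'$ (or through $f$ and $f'$) and is counted as length equal to its length in $\mathcal C'$ plus one per pass. Then $c''$ lies in the same homotopy class as $c$.

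The key counting step is the inequality
\[
\crossnum(c,\mathcal C)\le \crossnum(c'',\mathcal C)\le \crossnum(c',\mathcal C')+m,
\]
where $m$ is the number of pieces that go through the corner. We combine this with $\crossnum(c',\mathcal C')<\crossnum(c,\mathcal C)$. If $m=0$, the two inequalities contradict each other, so $m\ge1$. Concretely, we perturb $c''$ at one use of $v$ so that it crosses the two edges $e$ and $e'$ just next to $v$, inside the face corner between them. Its length then stays at most the length of its class plus small corrections, and we need this to equal the geodesic length $\crossnum(c,\mathcal C)\le k$. I expect this to be the delicate point. A cleaner way to get it may be the universal cover: lift to $\tilde S$, where by Claim~\ref{cl:intersections-cover} every lift of a strand is a line and each line is crossed at most once by a geodesic. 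In $\tilde S$, the length of a geodesic counts the lines separating its two ends at infinity, up to translation. The smoothing merges the two lines through $\tilde v$ into two new curves. A decrease in length means that a lift of $c$ separated the pair of rays $\{e,e'\}$ from $\{f,f'\}$. In other words, the two lines through $\tilde v$ both cross $\tilde c$, and $\tilde c$ can be isotoped, without changing the set of lines it crosses, to pass just beside $\tilde v$ through the corner between $e$ and $e'$. After projecting to $S$, this gives a geodesic of the same length at most $k$ that passes consecutively through $e$ and $e'$.

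The step I expect to be the main obstacle is the last one. The isotopy in $\tilde S$ has to be equivariant, or at least has to project to a closed curve that stays a geodesic. This requires checking that the moved arc of $\tilde c$ crosses no new lines, and that it hits the translates of $\tilde v$ only once per period. I would first try to handle this by sweeping through the region between $\tilde c$ and $\tilde v$ inside the quadrant bounded by $e$ and $e'$. The bigon-free property from Claim~\ref{cl:intersections-cover} ensures that no line enters this region without crossing $\tilde c$ or one of the two lines through $\tilde v$.
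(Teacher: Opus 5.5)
There is a genuine gap, and it sits exactly where you flag it. Your setup matches the paper's: you smooth at $v$ pairing $e$ with $e'$ and $f$ with $f'$, and compare $\mathcal C$ and $\mathcal C'$ in a small disk $D$ around $v$. But going from a $\mathcal C'$-geodesic $c'$ back to $\mathcal C$ does not give a $\mathcal C$-geodesic.

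\textbf{The count.} Lengths with respect to $\mathcal C$ are only defined for curves avoiding $v$. A strand of $c'$ that uses the gap runs between the corners $(e',f)$ and $(f',e)$, which the smoothing merges. In $\mathcal C$ such a strand must cross two half-edges, namely $e',e$ or $f,f'$. So, with the other strands in $D$ taut, $\crossnum(c'',\mathcal C)=\crossnum(c',\mathcal C')+2m$, not $+m$.

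\textbf{Why the rerouted curve need not be a geodesic.} The corrected count still forces $m\ge 1$. However, $\crossnum(c',\mathcal C')+2m$ can exceed $\crossnum(\gamma,\mathcal C)$: nothing prevents $c'$ from using the gap several times, some of them with no saving. In that case $c''$ is not a geodesic. Removing its bigons with $\mathcal C$ may then destroy the corner passages you created.

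\textbf{The universal-cover patch.} Knowing that $\tilde c$ separates the ends of the rays $e,e'$ from those of $f,f'$ at some lift $\tilde v$ does not let you slide $\tilde c$ next to $\tilde v$ without crossing new lines. A line $M$ can cross the rays $e$ and $e'$ between $\tilde v$ and $\tilde c$ while both of its endpoints lie on the same side of $\tilde c$. This is compatible with Claim~\ref{cl:intersections-cover}. Yet every curve of the class passing through the corner at that lift must cross $M$ twice, so your sweep gets stuck. You would have to pick the lift where the $\mathcal C'$-saving actually occurs and rule out blocking lines there, which is essentially the whole claim.

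\textbf{The other corner.} A passage through the corner $(f,f')$ still has to be slid across $v$ to reach the corner $(e,e')$. This slide preserves the length, but you do not mention it.

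\textbf{How the paper proceeds.} The paper avoids re-tightening altogether. It works from the start with one curve $c$ that is a $\mathcal C$-geodesic and is itself shorter with respect to $\mathcal C'$. This is what $\crossnum(c,\mathcal C')<\crossnum(c,\mathcal C)$ together with ``we may assume $c=c'$'' expresses; compare the definition of tightness in Section~\ref{SS:system-of-curves}, where the witness is in minimal position with respect to both systems. With such a $c$ the argument is purely local:
\begin{itemize}
\item $\mathcal C$ and $\mathcal C'$ agree outside $D$;
\item every strand of $c$ in $D$ is $\mathcal C$-taut, because $c$ is a geodesic;
\item the only strands whose crossing count drops under the smoothing are those running between the two merged corners, and in $\mathcal C$ these cross $e,e'$ or $f,f'$ consecutively.
\end{itemize}
The missing ingredient in your proposal is therefore a $\mathcal C$-geodesic that already witnesses the decrease. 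Either take it as the input, as the paper does, or prove that the decrease is always realized by such a curve.
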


\begin{proof}
\renewcommand\qedsymbol{$\lrcorner$}
Let $\mathcal C'$ be the system of curves obtained by smoothing $v$ and connecting $e$ with $e'$. As $\mathcal C$ is geodesically $k$-covered, there exist homotopic geodesics $c$ and $c'$ respectively for  $\mathcal C$ and $\mathcal C'$ and such that $\crossnum(c,\mathcal C') < \crossnum(c,\mathcal C)\leq k$. Since $\mathcal C$ and $\mathcal C'$ differ from each other only in the neighborhood of $v$ in $S$ where the smoothing has been performed, we may assume that $c=c'$ and the curve $c$ must use at least once the gap between $e$ and $e'$ formed by the smoothing. Therefore, $c$ crosses consecutively $e$ and $e'$ in a neighborhood of $v$. 
\end{proof}

Let $e_1,e_2,e_3$ and $e_4$ be the four consecutive edges around a vertex $v$ of $\mathcal{C}$. By Claim~\ref{cl:filling-vertex-property} there is a non-contractible closed curve $s_1$ in minimal position with $\mathcal C$ crossing consecutively $e_1$ and $e_2$. Similarly, there is a non-contractible closed curve $s_2$ in minimal position with $\mathcal C$ crossing consecutively $e_2$ and $e_3$. We denote by $\mathcal{S}$ a set of curves containing a pair of such curves $s_1$ and $s_2$ for each vertex of $\mathcal{C}$. 

We first handle the case where the surface is a torus. 

\begin{claim} \label{cl:filling-vertex-property2}
If the surface $S$ is a torus and $v$ is a vertex of $\mathcal{C}$ and $s_1$ and $s_2$ are the two corresponding closed curves of $\mathcal{S}$, then $s_1$ and $s_2$ intersect essentially.
\end{claim}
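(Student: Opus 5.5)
The plan is to work in the universal cover $\tilde S\cong\mathbb R^2$. First I show that suitable lifts of $s_1$ and $s_2$ have their ends on opposite sides of each other. Then I use the abelian structure of $\pi_1$ of the torus to rule out that $s_1$ and $s_2$ have parallel homotopy classes. On the torus, two non-contractible curves with classes $a,b\in\mathbb Z^2$ intersect essentially if and only if $\det(a,b)\neq 0$, i.e. $a,b$ are not proportional. So it suffices to prove that $a$ and $b$ are not proportional.

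\textbf{Step 1: local picture in the cover.} Let $c$ be the curve of $\mathcal C$ containing $e_1,e_3$ (these are opposite at $v$) and $c'$ the one containing $e_2,e_4$; possibly $c=c'$. Fix a lift $\tilde v$ of $v$, and let $\ell,\ell'$ be the lifts of $c,c'$ through $\tilde v$ along $\tilde e_1\tilde e_3$ and $\tilde e_2\tilde e_4$. Since $\mathcal C$ is in minimal position, Claim~\ref{cl:intersections-cover} (applied also with $c=c'$) shows that $\ell,\ell'$ are lines meeting only at $\tilde v$. They therefore cut $\tilde S$ into four quadrants, which I label $A,B,C,D$ in cyclic order, with $B$ between $e_1,e_2$, $C$ between $e_2,e_3$, $D$ between $e_3,e_4$ and $A$ between $e_4,e_1$. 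Next take the lifts $\tilde s_1,\tilde s_2$ passing through the corner near $\tilde v$. Since $s_1,s_2$ are geodesics, they are in minimal position with $\mathcal C$. Hence, again by Claim~\ref{cl:intersections-cover}, $\tilde s_i$ is a line crossing each of $\ell,\ell'$ exactly once, namely at the corner crossings. Consequently $\tilde s_1$ runs from $A$ through $B$ into $C$, with both tails contained in $A$ and $C$ respectively and $\tilde s_1\cap B$ a single short arc. Likewise $\tilde s_2$ runs from $B$ through $C$ into $D$.

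\textbf{Step 2: $\tilde s_2$ has its ends on opposite sides of $\tilde s_1$.} The line $\tilde s_1$ does not meet $D$, so $D$ lies entirely on one side of it; this is the side containing $\tilde v$, since $\tilde v$ lies in the closure of $D$. The region $B$ minus the short corner arc of $\tilde s_1$ has two pieces: the small corner triangle bounded by $e_1$, $e_2$ and $\tilde s_1$, which contains $\tilde v$, and the far part of $B$, which lies on the other side. A tail of $\tilde s_2$ far out in $B$ thus lies on the side opposite to $D$, whereas its other tail lies in $D$. I should check that the $B$-tail of $\tilde s_2$ eventually leaves every bounded set. This holds because lifts of non-contractible closed curves are proper: they are quasi-invariant under a nontrivial deck translation.

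\textbf{Step 3: excluding parallel classes, which is the main point.} Suppose $b$ is proportional to $a$. Then there is a nontrivial deck translation $T$ (a common multiple) preserving both $\tilde s_1$ and $\tilde s_2$ setwise and each of their orientations up to sign. Since $T$ is a translation of the plane preserving $\tilde s_1$, it maps each side of $\tilde s_1$ to itself. Parametrize $\tilde s_2$ so that $T\tilde s_2(t)=\tilde s_2(t+p)$ for some period $p$. For every $t$ outside $\tilde s_1$, the points $\tilde s_2(t)$ and $\tilde s_2(t+p)$ then lie on the same side of $\tilde s_1$. Letting $t\to\pm\infty$ (using periodicity, in either direction) shows that both ends of $\tilde s_2$ eventually visit the same side, contradicting Step 2. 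I expect this step to be the subtle one: one has to make sure that $T$ preserves sides and cannot swap them. This follows because $T$ is orientation preserving and maps the oriented line $\tilde s_1$ to itself with the same orientation, as it is a power of the deck translation associated to $s_1$. With that settled, $\det(a,b)\neq 0$, and therefore $s_1$ and $s_2$ intersect essentially. Note that this argument never requires $s_1$ and $s_2$ to be in minimal position with each other.
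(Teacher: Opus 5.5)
Your argument is correct. It starts from the same picture as the paper's proof: the lifts of the two curves of $\mathcal C$ through $\tilde v$ are lines crossing once, and the lifts $\tilde s_1,\tilde s_2$ through the corners meet each of them exactly once. The decisive step is different, though.

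The paper orients $c_1,c_2,s_1,s_2$ so that both $\tilde s_i$ cross $\tilde c_1$ positively, and notes that they then cross $\tilde c_2$ with opposite signs. It then invokes the homotopy invariance of these signs to conclude that $s_1,s_2$ cannot be powers of a common primitive class. That invariance is left implicit: it amounts to identifying the sign of the unique crossing of $\tilde s_i$ with $\tilde c_j$ with the sign of the algebraic intersection number, i.e.\ of $\det$ of the two classes.

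You instead work with $s_1$ and $s_2$ themselves. You show directly that $\tilde s_1$ separates the two ends of $\tilde s_2$, and you rule out proportional classes because a common deck translation would preserve both lines and the sides of $\tilde s_1$. This is more self-contained. It is essentially the Euclidean counterpart of the endpoint-linking argument the paper uses in higher genus for Claim~\ref{cl:filling-curve-property}, with the common translation taking the place of the circle at infinity.

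The price is one extra ingredient: $\tilde s_1$ must be an embedded line, so that it has two sides. By Claim~\ref{cl:intersections-cover}, this requires $s_1$ to be in minimal position with itself, not merely a geodesic with respect to $\mathcal C$. Your sentence ``since $s_1,s_2$ are geodesics, they are in minimal position with $\mathcal C$; hence $\tilde s_i$ is a line'' does not quite deliver this. It is consistent with the paper's later standing assumption that $\mathcal C\cup\mathcal S$ is in minimal position, but you should state it explicitly. The paper's sign argument only uses how $\tilde s_i$ meets $\tilde c_1$ and $\tilde c_2$.
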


\begin{proof}[Proof of Claim~\ref{cl:filling-vertex-property2}]
Because $\mathcal C$ is in particular tight, all the curves it contains are primitive by Theorem~\ref{th:schrijver2}. On the torus, this is equivalent to saying that they are simple curves. We denote by $c_1$ and $c_2$ the two curves of $\mathcal C$ containing respectively $e_1$ and $e_2$. The universal cover of the torus is $\mathbb{R}^2$, any pair of lifts $\tilde{c_1}$ and $\tilde{c_2}$ of respectively $c_1$ and $c_2$ are lines that cross exactly once (because $\mathcal{C}$ is in minimal position). In particular, $c_1$ and $c_2$ are not homotopic to each other. Likewise, all other curves of $\mathcal{C}$ lift to lines crossing $c_1$ and $c_2$ at most once. We lift $s_1$ and $s_2$ into $\tilde{s_1}$ and $\tilde{s_2}$, choosing a basepoint for the lifting in a face adjacent to this crossing point $x$. Since $s_1$ and $s_2$ are in minimal position with $\mathcal{C}$, each lift $\tilde{s_i}$ crosses each lift $\tilde{c_j}$ at most once by Claim~\ref{cl:intersections-cover}.

We fix an orientation of the torus and of $c_1$ and $c_2$ and orient $s_1$ and $s_2$ so that $\tilde{s_1}$ and $\tilde{s_2}$ both intersect $\tilde{c_1}$ positively: this means that the local orientation of $(x,s_i,c_1)$ matches the chosen orientation of the torus. Then, by construction $\tilde{s_1}$ and $\tilde{s_2}$ intersect $\tilde{c_2}$ with different signs. This implies that $s_1$ and $s_2$ are not homotopic as this sign is preserved under homotopies, and reversing the orientation of one curve to fix this would lead to opposite signs with $\tilde{c_1}$. 

On a torus, the only possible way for two curves to not intersect essentially is if they are both homotopic to a power of the same primitive curve. This would be incompatible with the signs of the intersections with the curves $c_i$, concluding the proof of the claim.
\end{proof}

Since on a torus, any pair of curves intersecting essentially is filling, Claim~\ref{cl:filling-vertex-property2} concludes the proof for the torus with the family $\mathcal{S}$. In the rest of the proof we assume that the surface has negative Euler characteristic. Therefore its universal cover is homeomorphic to an open disk and can be endowed with a boundary at infinity $\partial \tilde{S}$ which is a circle. The lifts of curves in $\tilde{S}$ are lines with endpoints on this boundary, and two curves are homotopic if and only if there exist lifts of the two curves having the same endpoints. Furthermore, no two lifts have exactly one endpoint in common. We refer to~\cite[Section~1]{farb2011primer}.

\begin{figure}
    \centering
    \includegraphics[height=4cm]{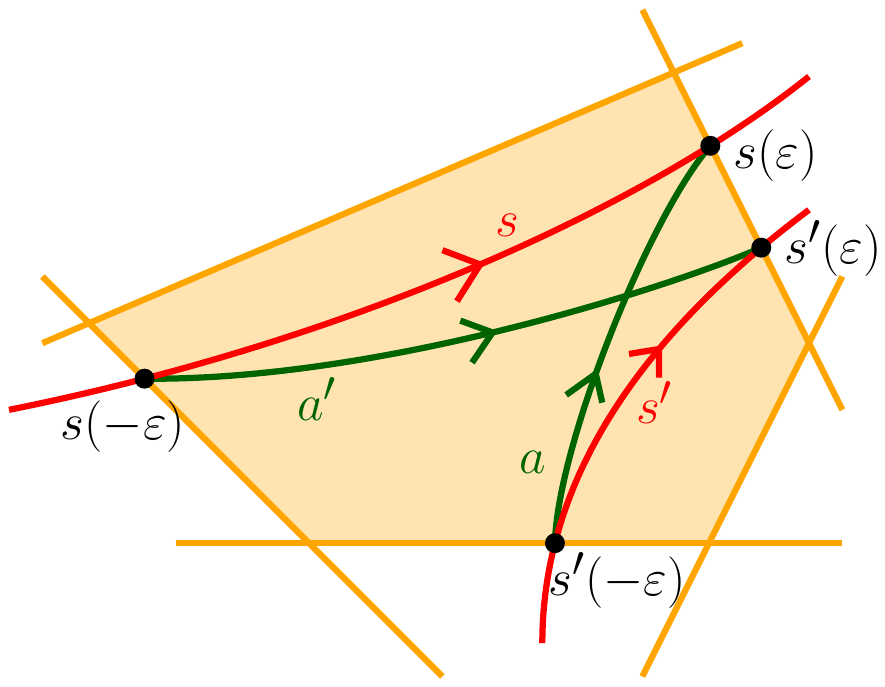}
    \caption{Wedding of two curves $s$ and $s'$.}
    \label{fig:definition_t_uv}
\end{figure}

We now define curves obtained by ``gluing'' or ``unsmoothing'' two curves in $\mathcal{S}$. They will be used to fill the surface $S$. Let $s$ and $s'$ be two curves in $\mathcal S$ that start in the same face $f$ of $\mathcal C$ and such that $\mathcal C \cup \{s, s'\}$ is in minimal position. Up to homotopy we can further assume that the first crossings with the boundary of $f$ occur for both $s$ and $s'$ at time $\pm\epsilon$ and that $s$ and $s'$ do not intersect in this initial face. Namely $s([-\epsilon,\epsilon]) \cap s'([-\epsilon,\epsilon]) = \emptyset$ and that the points $s(\epsilon)$, $s(-\epsilon)$, $s'(-\epsilon)$ and $s'(\epsilon)$ appear on the boundary of $f$ in that counterclockwise order. Let $a$ and $a'$ be arcs in the face $f$ that are in minimal position and go respectively from $s'(-\epsilon)$ to $s(\epsilon)$ and $s(-\epsilon)$ to $s'(\epsilon)$. These two arcs intersect exactly once. We define a new curve $t$ by the concatenation $t := a \cdot s|_{[\epsilon,-\epsilon]} \cdot a' \cdot s'|_{[\epsilon,-\epsilon]}$, see Figure~\ref{fig:definition_t_uv}.
We say that $t$ is obtained by \define{wedding} $s$ and $s'$ if $t$ is in minimal position with respect to itself. Equivalently, we can make a wedding if in the universal cover $\widetilde S$ of $S$ the two lifts of $s$ and $s'$ starting from the same lift $\widetilde f$ of $f$ do not intersect. By definition, a wedding of two curves in $\mathcal{S}$ has length at most $2k$ with respect to $\mathcal C$. If such a wedding is put in minimal position with respect to $\mathcal C$ by a homotopy, it might even get shorter.

We define the set $\mathcal S_2$ to be a set of representatives of the homotopy classes of all possible weddings of curves in $\mathcal S$ such that $\mathcal C \cup \mathcal S \cup \mathcal S_2$ is in minimal position. The following claim proves that the curves in $\mathcal S \cup \mathcal S_2$ intersect every possible simple closed curve on the surface.

\begin{claim}
\label{cl:filling-curve-property}
Let $c$ be a non-contractible simple closed curve. Then there exists $s \in (\mathcal S \cup \mathcal S_2)$ such that $c$ and $s$ intersect essentially.
\end{claim}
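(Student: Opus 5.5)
We argue by contradiction. Put $c$ in minimal position with respect to $\mathcal C \cup \mathcal S \cup \mathcal S_2$, and assume that $c$ intersects essentially none of the curves of $\mathcal S \cup \mathcal S_2$. In minimal position, $c$ is then disjoint from every $s \in \mathcal S \cup \mathcal S_2$. Since $\mathcal C$ is filling, $c$ still crosses $\mathcal C$. We work in the universal cover $\tilde S$ (hyperbolic case; the torus case is already settled). By Claim~\ref{cl:intersections-cover}, a lift $\tilde c$ is a line, and it splits $\tilde S$ into two open half-discs, a left side $L$ and a right side $R$. Each lift of each $s \in \mathcal S$ is a line disjoint from every lift of $c$, so it lies entirely in $L$ or entirely in $R$. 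Also, $\tilde c$ is geodesic for the lifted system $\tilde{\mathcal C}$, so it crosses each lifted face at most once: two crossings of the same face would give a bigon with $\tilde{\mathcal C}$, contradicting Claim~\ref{cl:intersections-cover}.

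The first step is to find a lifted face $\tilde f$ crossed by $\tilde c$ together with two curves $s, s' \in \mathcal S$ whose lifts enter $\tilde f$ from opposite sides of $\tilde c$. Since $\tilde c$ enters $\tilde f$ through one edge and leaves through another, $\tilde f$ has corners in both $L$ and $R$. By Claim~\ref{cl:filling-vertex-property} and the construction of $\mathcal S$, each vertex contributes curves turning around two consecutive corners, $e_1e_2$ and $e_2e_3$. Such a curve passes close to the vertex, hence on the same side of $\tilde c$ as that vertex, as long as $\tilde c$ does not pass through the corner in question. I expect this to be the main obstacle: we must guarantee that the corners used really lie in $\tilde f$ on each side. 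If the available curves at a vertex use the wrong corners, I would first try to move to an adjacent face crossed by $\tilde c$, using that consecutive vertices of $\tilde f$ along its boundary supply curves through all corners between them. Failing that, I would redefine $\mathcal S$ to include curves for all four corners, which does not affect the length bound $k$.

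Once such $s$ and $s'$ are found, their lifts $\tilde s \subset L$ and $\tilde s' \subset R$ start in the same face $\tilde f$ and are disjoint. By the universal-cover characterisation in the definition, the wedding $t$ of $s$ and $s'$ exists, and some curve homotopic to it belongs to $\mathcal S_2$. The next step is to determine how the lift $\tilde t$ through $\tilde f$ meets $\tilde c$. The line $\tilde t$ is a concatenation of translates of lifts of $s$ and $s'$, joined by lifted copies of the arcs $a, a'$ inside translates of $\tilde f$. The lifts of $s$ and $s'$ avoid every lift of $c$, so all crossings of $\tilde t$ with $\tilde c$ occur inside lifted copies of $f$ that $\tilde c$ traverses. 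Since $\tilde c$ traverses only one such face on $\tilde t$, namely $\tilde f$ itself (the other translates are distinct faces met at most once), $\tilde t$ and $\tilde c$ cross exactly once. The crossing occurs where the arcs $a$ and $a'$ pass between the left and right halves.

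Finally, two lines in $\tilde S$ crossing exactly once have interleaving endpoints on $\partial \tilde S$. This interleaving is a homotopy invariant, so $t$ and $c$ intersect essentially. This contradicts the assumption, and the claim follows.
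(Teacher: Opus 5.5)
Your route differs from the paper's, which takes an edge $e$ crossed by $c$ and does a case analysis on the lines $\tilde{c_2},\tilde{c_3}$ through its endpoints. Yours could be completed, but as written both of its key steps are unproved.

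\textbf{Step 1 is left open.} You flag it as the main obstacle and offer only tentative remedies. Redefining $\mathcal S$ to contain all four corners changes the statement, which concerns the given $\mathcal S$ with two \emph{adjacent} corners per vertex. Two observations are missing.

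(i) A corner curve at a vertex $w$ crosses each of the two lines of $\widetilde{\mathcal C}$ through $\tilde w$ exactly once. It passes through three of the four faces at $\tilde w$, missing only the face at the opposite corner. Since the two chosen corners are adjacent, every vertex of $\tilde f$ supplies a curve through $\tilde f$.

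(ii) Suppose $\tilde c$ separates $\tilde w$ from such a curve. Then $\tilde c$ enters the small triangle that curve cuts off at $\tilde w$, so it must cross both lines through $\tilde w$. Hence $\tilde c$ joins two opposite quadrants at $\tilde w$. Of the two adjacent corner curves at $w$, one joins the \emph{other} pair of opposite quadrants. Its endpoints therefore interleave with those of $\tilde c$, so $c$ meets it essentially, contradicting your hypothesis. So you may assume $\tilde c$ crosses at most one line through each vertex. Then each vertex of $\tilde f$ yields a curve through $\tilde f$ lying on that vertex's own side of $\tilde c$.

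\textbf{The ``exactly once'' step is a non sequitur.} That $\tilde c$ meets each face at most once says nothing about whether it meets other translates $g\tilde f$ along $\tilde t$. Nor does it say on which side of $\tilde c$ the subsequent lifts of $s,s'$ lie. The facts you invoke are not enough. Suppose $s$ is homotopic to a power of $c$, i.e.\ a corner curve running parallel to $c$. Then the deck transformation $h$ with axis $\tilde s$ satisfies $h\tilde c=\tilde c$. So $\tilde c$ also crosses $h\tilde f$ and separates $\tilde s$ from $h\tilde s'\subset hR=R$ there. The arc $a'$ in $h\tilde f$ then crosses $\tilde c$ a second time, and the endpoints of $\tilde t$ need not separate those of $\tilde c$.

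\textbf{How to repair it.} First exclude $h\tilde c=\tilde c$: in that case $\tilde c$ has the endpoints of $\tilde s$, so it crosses both lines through the vertex of $s$, which (ii) rules out.

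Next, the endpoints of $\tilde c$ lie in one component of $\partial\tilde S$ minus the endpoints of $\tilde s$. On that arc $h$ acts as a fixed-point-free translation. So the disjoint lines $\tilde c$ and $h\tilde c$ are side by side, which gives $hR\subset L$. Symmetrically $h'L\subset R$ for the deck transformation $h'$ of $\tilde s'$, and likewise for $h^{-1}$ and $h'^{-1}$.

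Finally, $\tilde t=\cdots\,\tilde s'\,a\,\tilde s\,a'\,h\tilde s'\,a\,hh'\tilde s\cdots$. Its forward end therefore lies in the nested regions $L\supset hR\supset hh'L\supset\cdots$, and its backward end in $R\supset h'^{-1}L\supset\cdots$. Hence its endpoints interleave with those of $\tilde c$.
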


\begin{proof}
\renewcommand\qedsymbol{$\lrcorner$}
We first put $c$ in minimal position with respect to $\mathcal{C} \cup \mathcal S \cup \mathcal S_2$. Since $\mathcal{C}$ is filling, there is an edge $e=(uv)$ of $\mathcal{C}$ that intersects $c$. Let $c_1$ be the curve of $\mathcal C$ containing the edge $e$ and $c_2$ and $c_3$ be the curve crossing $c_1$ at $u$ and $v$ respectively. We denote by $s_1$ and $s_2$ the two curves in $\mathcal S$ corresponding to $u$ and by $s_3$ and $s_4$ the ones corresponding to $v$. We will prove that $c$ intersects either $s_1$, $s_2$, $s_3$, $s_4$ or one of their pairwise weddings. 

\begin{figure}
    \centering
    \includegraphics[height=5cm]{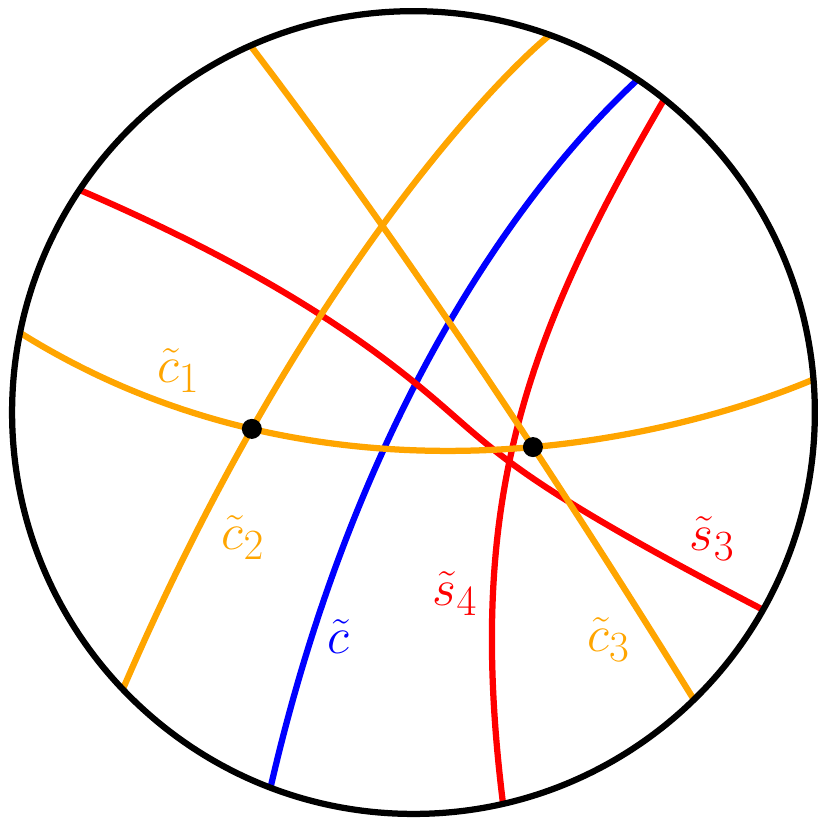}\hspace{1cm}\includegraphics[height=5cm]{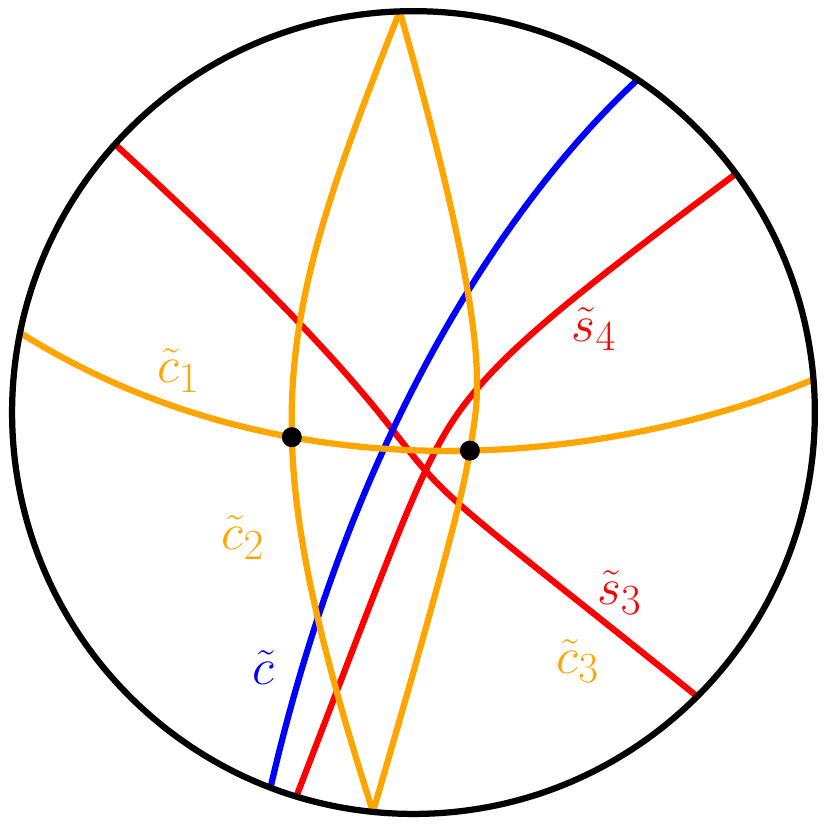}\hspace{1cm}\includegraphics[height=5cm]{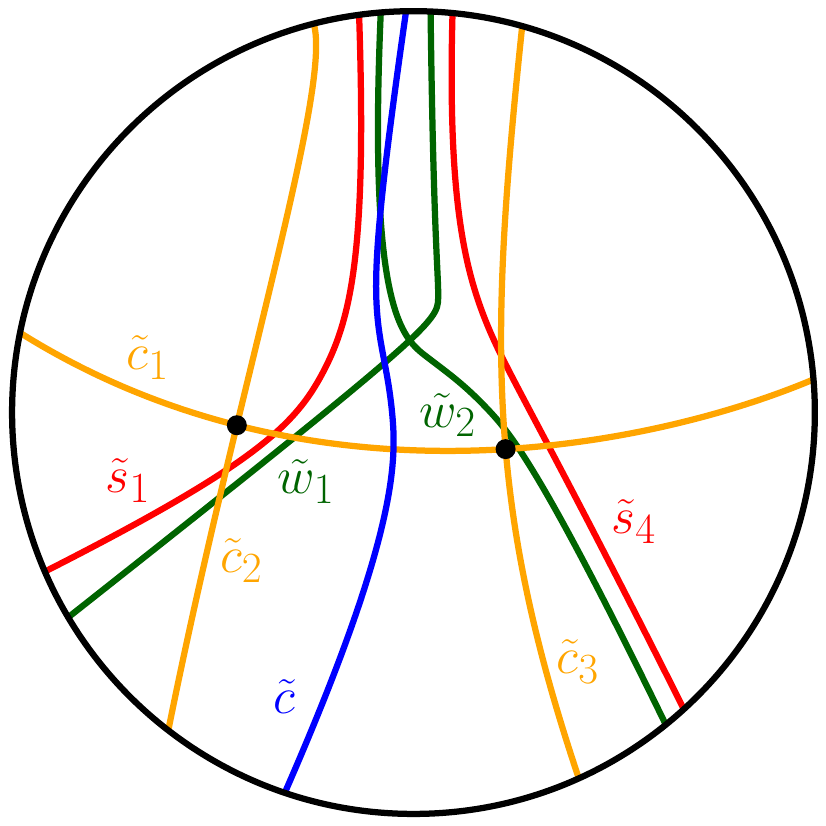}
    \caption{The three different cases leading to the proof of claim \ref{cl:filling-curve-property}.}
    \label{fig:intersect_cases}
\end{figure}

In the universal cover, our local picture lifts as in Figure~\ref{fig:intersect_cases}, yielding lines $\tilde{c_1}$, $\tilde{c_2}$, $\tilde{c_3}$, $\tilde{s_1}$, $\tilde{s_2}$, $\tilde{s_3}$, $\tilde{s_4}$ and $\tilde{c}$. The lines $\tilde{c_1}$,$\tilde{c_2}$, $\tilde{c_3}$ intersect pairwise at most once and are simple because $\mathcal{C}$ is in minimal position. We do a case by case analysis depending on the topology of $\tilde{c_2}$ and $\tilde{c_3}$. The only ingredient we use is the fact that each pair of these curves intersect at most once.

If $\tilde{c_2}$ and $\tilde{c_3}$ intersect, then it is immediate that $\tilde{c}$ intersects at least one of the curves $\tilde{s_1}$, $\tilde{s_2}$, $\tilde{s_3}$ or $\tilde{s_4}$, see top left Figure~\ref{fig:intersect_cases}. The same happens if $c_2$ and $c_3$ are homotopic since in that case $\tilde{c_2}$ and $\tilde{c_3}$ have the same endpoints in $\partial \tilde{S}$. Since these intersections are forced by the endpoints on $\partial \tilde{S}$ and homotopies do not move these endpoints, these intersections on $\tilde{S}$ project to essential intersections on $S$. See top right Figure~\ref{fig:intersect_cases}.

The last case is if $\tilde{c_2}$ and $\tilde{c_3}$ are disjoint (including at their endpoints) as in bottom Figure~\ref{fig:intersect_cases}. Then the wedding $w$ of $s_1$ and $s_4$ based at the added crossing lifts to two lines $\tilde{w_1}$ and $\tilde{w_2}$ which cross the lines $\tilde{c_1}$ and $\tilde{c_2}$, respectively $\tilde{c_2}$ and $\tilde{c_3}$ the same way that $\tilde{s_1}$ and $\tilde{s_4}$ do. In particular, this wedding exists since $\tilde{w_1}$ and $\tilde{w_2}$ cannot cross again due to $\tilde{c_1}, \tilde{c_2}$ and $\tilde{c_3}$ acting as barriers for $\tilde{s_1}$ and $\tilde{s_4}$. Since $\tilde{c}$ intersects $\tilde{e}$, it must intersect either $\tilde{w_1}$ or $\tilde{w_2}$ and thus $c$ intersects $w$ essentially. This concludes the proof.
\end{proof}

Therefore, the family $\mathcal{S} \cup \mathcal{S}_2$ intersects every possible simple non-contractible curve in the surface: it is filling. This claim concludes the proof.
\end{proof}

We encapsulate this last result in the following definition: we say that a system of curves $\mathcal C$ is \define{$k$-saturating} if there exists another system of curves $\mathcal S$ such that:

\begin{itemize}
    \item $\mathcal{C} \cup \mathcal{S}$ is in minimal position,
    \item Each curve in $\mathcal{S}$ has length at most $k$, and
    \item The system of curves $\mathcal{S}$ is filling.
\end{itemize}

Therefore, Proposition~\ref{pr:filling-systoles} shows that a geodesically $k$-covered system of curves is $2k$-saturating. The following lemma shows that one can extract a filling family of size $O(g)$ from any filling family. This is certainly folklore, see for example Chen~\cite[Theorem~3.15]{chen2025indexgapsystolefunction}. We include a proof for completeness.

\begin{lemma}
\label{le:genus-size-filling-family}
Let $\mathcal C$ be a filling system of curves in minimal position on a closed orientable surface $S$ of genus $g \ge 1$. Then there is a sub-system $\mathcal{C}' \subset \mathcal C$ of at most $3g-1$ curves which is also filling.
\end{lemma}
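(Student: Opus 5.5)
We plan a greedy extraction argument driven by an Euler characteristic (complexity) count. We start with one curve $c_1 \in \mathcal C$ and add curves one at a time. At each stage we keep a subfamily $\mathcal C_i$ and consider the subsurface $N_i$, a regular neighborhood of $\bigcup \mathcal C_i$ with every complementary disk added back in. Equivalently, $N_i$ is the surface filled by $\mathcal C_i$. Since $\mathcal C$ is in minimal position, any subfamily is in minimal position too. So the complementary components of $S \setminus N_i$ are exactly the parts of $S$ on which some non-contractible simple closed curve misses $\mathcal C_i$. As long as $\mathcal C_i$ is not filling, we choose such a curve $\gamma$ in a non-disk component of $S\setminus N_i$. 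Because $\mathcal C$ is filling, some curve $c\in\mathcal C$ must intersect $\gamma$ essentially. We add this $c$. In minimal position $c$ genuinely crosses the boundary of $N_i$, so it enters a complementary region that was not a disk.

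The key step is a strict monotone quantity bounded in terms of $g$. We would use $\phi(N) = -\chi(N) + (\text{number of boundary curves of } N \text{ that are essential in } S)$, or more simply the pair $(-\chi(N_i), \#\text{components of } S\setminus N_i)$. We need to show that adding $c$ increases $-\chi$ of the filled surface by at least one, or otherwise changes the topology in a controlled, bounded way. After we attach $c$ to $N_i$, the new regular neighborhood is $N_i$ with bands attached along the arcs of $c$ outside $N_i$. Each such arc lies in a non-disk complementary region and is essential there, since otherwise a bigon with $\partial N_i$ would contradict minimal position. Attaching an essential band lowers $\chi$ by one, and then capping off new disk components raises it. The claim to check is that the net effect is a decrease of $\chi$ by at least one for the first essential arc, except in the annulus case. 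The annulus case can happen when the first curve alone gives $\chi=0$, or when gluing closes up an annulus.

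Next we bound the total. The filled subsurface $N_i\subseteq S$ has $\chi(N_i)\ge \chi(S)=2-2g$ up to boundary accounting. The first curve gives an annulus, with $\chi=0$. Each later step decreases $\chi$ by at least $1$ until $\chi(N)=2-2g$. This gives at most $1+(2g-2)=2g-1$ curves for $\chi$ alone. There is one more subtlety. The complement may still contain annuli whose core curves are essential, for example when $N$ is a surface with two boundary curves homotopic in $S$. Removing those can require steps that do not decrease $\chi$ but reduce the number of such annular complementary components, each glued pair of homotopic boundaries. There are at most $g$ of these, because such annuli correspond to disjoint non-homotopic curves of a pants-type structure. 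This accounts for the additional $g$, giving $3g-1$. The step I expect to be the main obstacle is exactly this accounting: making precise that each added curve either decreases $\chi(N_i)$ or decreases the number of essential annular complementary regions, without increasing the other. The cleanest route is probably to track the complexity $-\chi(N_i)+\#\{\text{non-disk components of } S\setminus N_i\}$, show it strictly increases at each step, and show it is bounded by $3g-1$ for any proper subsurface.
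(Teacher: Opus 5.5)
Your overall plan is the same as the paper's: greedy extraction, with Euler characteristic paying for at most $2g-2$ steps and an extra $g$ for steps that leave $\chi$ unchanged. The gap is the accounting you flag as the main obstacle. It is not merely unfinished; the quantities you propose are not monotone.

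Suppose the new curve crosses an annular complementary region $A$ whose two boundary circles both lie on $N_i$, say once by a single arc. Then $\chi$ is unchanged: the band lowers it by one, and filling the disk $A\setminus\text{band}$ raises it back. Meanwhile the number of non-disk complementary components drops by one and the number of boundary curves drops by two. Concretely, if $N_i$ is $S$ cut along a non-separating simple curve, then $-\chi(N_i)+\#\{\text{non-disk components of } S\setminus N_i\}=2g-1$, while for $N_{i+1}=S$ it equals $2g-2$. The same step also decreases your $\phi$, and it decreases the second coordinate of your lexicographic pair.

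Your justification of the extra $g$ ("annular regions correspond to curves of a pants-type structure") also fails. It bounds the annular regions present at one moment, but $\chi$-decreasing steps can create new ones. For example, a band joining two boundary circles of a pair-of-pants region leaves an annulus behind. So it does not control the number of annulus-absorbing steps over the whole process.

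The missing idea is the genus. Write $N_{i+1}$ as $N_i$ (shrunk into the interior) together with the components $K$ of $N_{i+1}\setminus \mathrm{int}\,N_i$, glued along circles.
\begin{itemize}
\item No $K$ is a disk, since its boundary would be a boundary curve of $N_i$, and these are essential in $S$. Hence $\chi(K)\le 0$ and $\chi(N_{i+1})\le \chi(N_i)$, with equality only if every $K$ is an annulus.
\item Not every $K$ is a collar, because $c$ meets $\gamma$ essentially.
\item A non-collar annulus joins two boundary circles of the connected surface $N_i$, so it raises the genus by one.
\end{itemize}
So $\Phi=-\chi(N_i)+\operatorname{genus}(N_i)$ increases by at least $1$ per step. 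It starts at $\ge 0$ and ends at $3g-2$ for $N=S$, which gives at most $3g-1$ curves; this is exactly the paper's count of at most $2g-2$ $\chi$-decreasing steps plus at most $g$ genus-increasing ones. Note that $N_1$ need not be an annulus, since curves of $\mathcal C$ need not be simple, but $\chi(N_1)\le 0$ suffices.

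This argument also needs $N_{i+1}$ to be connected, i.e.\ $c$ must cross $\mathcal C_i$; the paper imposes this explicitly. Your claim that $c$ ``genuinely crosses the boundary of $N_i$'' does not follow from $c$ meeting an arbitrary $\gamma$ in a complementary region. For instance, $c$ could be a simple curve inside a one-holed-torus region, disjoint from $\mathcal C_i$, which leaves both $\chi$ and your complexity unchanged. Choosing $\gamma$ to be a boundary curve of $N_i$ fixes this.
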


\begin{proof}[Proof of Lemma~\ref{le:genus-size-filling-family}]
For $\mathcal{C}$ a system of curves in minimal position, the \define{subsurface spanned by $\mathcal C$} is a closed subsurface $S(\mathcal{C}) \subseteq S$ defined by first taking a regular neighborhood $N(\mathcal C)$ of the curves in $\mathcal{C}$, and then filling the components of $S \setminus N(\mathcal C)$ which are topological disks.

We proceed greedily. We pick a curve $c_1 \in \mathcal{C}$ and let $\mathcal{C}_1 := \{c_1\}$. Then we define a finite sequence of curves and curve systems $\mathcal{C}_i$ for $i \ge 1$ as follows
\begin{itemize}
    \item if $S(\mathcal{C}_i)=S$, we stop
    \item if not, we pick $c_{i+1} \in \mathcal{C}$ such that $\crossnum(\mathcal C_i, c_{i+1}) > 0$ and there exists a closed curve $c$ in minimal position with respect to $\mathcal C$ such that $\crossnum(\mathcal{C}_i, c) = 0$ but $\crossnum(c_{i+1},c)>0$, and we set $\mathcal{C}_{i+1}=\mathcal{C}_i\cup\{c_{i+1}\}$.
\end{itemize}
Note that the sequence is well defined because $\mathcal{C}$ is filling. For each $i$, let $S_i$ be the subsurface spanned by $\mathcal{C}_i$. By definition $S_{i-1}\subset S_i$. By the first property in the construction of $\mathcal C_{i+1}$ from $\mathcal C_i$ it follows by induction that each $S_i$ is connected. By the second property, each inclusion $S_{i-1} \subset S_i$ is topologically strict in the sense that each $S_i$ contains at least one essential (non-contractible and not homotopic to a boundary) simple closed curve disjoint from $S_{i-1}$. Since the surface $S_i \setminus S_{i-1}$ has at least one boundary and is connected and orientable, this implies that it either has negative Euler characteristic, or is an annulus whose boundaries are not homotopic to a boundary of $S_i$. In the second case, by connectedness the genus of $S_{i-1}$ is smaller than the one of $S_i$. As the genus of $S_i$ is smaller than $g$, this case happens at most $g$ times. Finally, we recall that the Euler characteristic is additive: $\chi(S_{i-1})+\chi(S_i \setminus S_{i-1})=\chi(S_i)$. So since $\chi(S_1)\le 0$, the first case happens at most $2g-2$ times.

In total, there are at most $g+2g-2$ inclusions of subsurfaces, and thus at most $3g-1$ curves in the constructed system of curves when the algorithm stops.
\end{proof}

We can now combine our tools to obtain a first polynomial upper bound.

\begin{theorem}
\label{th:vertices-number-systolic}
Let $\mathcal C$ be a $k$-saturating system of curves. Then the crossing number of $\mathcal C$ is at most $2((3g-1) \cdot k)^2$.
\end{theorem}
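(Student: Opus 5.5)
We take the system $\mathcal S$ witnessing that $\mathcal C$ is $k$-saturating. By Lemma~\ref{le:genus-size-filling-family} we pass to a filling sub-system $\mathcal S'=\{s_1,\dots,s_m\}\subseteq\mathcal S$ with $m\le 3g-1$. Each $s_j$ still has length at most $k$, and $\mathcal C\cup\mathcal S'$ is still in minimal position. The total number of crossings between $\mathcal C$ and $\mathcal S'$ is $L:=\sum_j \crossnum(s_j,\mathcal C)\le mk\le (3g-1)k$.

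Because $\mathcal S'$ is filling, the complement $S\setminus\bigcup\mathcal S'$ is a disjoint union of open disks. We then bound the crossings of $\mathcal C$ inside each such face $D$. Every intersection point of $\mathcal C$ lies in the interior of some face $D$, since we may perturb so that no crossing of $\mathcal C$ lies on $\mathcal S'$. Within $D$, the curves of $\mathcal C$ form a family of arcs with endpoints on $\partial D$. Each endpoint is a crossing of $\mathcal C$ with $\mathcal S'$. Hence, summing over all faces, the total number of arc-endpoints is at most $2L$: each crossing point of $\mathcal C$ with $\mathcal S'$ is an endpoint of one arc on each of its two sides. The total number of arcs is therefore at most $L$.

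The key step is to show that two arcs of $\mathcal C$ inside the same disk $D$ cross at most once, and that no arc crosses itself. We argue in the universal cover, as in Claim~\ref{cl:intersections-cover}. Lift $D$ to a disk $\tilde D$ in $\tilde S$. Two arcs in $\tilde D$ are subarcs of lifts of curves of $\mathcal C$, and by Claim~\ref{cl:intersections-cover} these lifts intersect at most once. The same claim shows that a lift is a simple line, which rules out a self-crossing of an arc. This is the step where minimal position of $\mathcal C$ is used. Note that minimality with $\mathcal S'$ is not actually needed for this bound, only to control the lengths.

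Counting then gives at most $\binom{a_D}{2}$ crossings in a face with $a_D$ arcs. Summing over faces and using $\sum_D a_D\le L$,
\[
\crossnum(\mathcal C)\le \sum_D \binom{a_D}{2}\le \binom{L}{2}\le \tfrac12 L^2\le \tfrac12\big((3g-1)k\big)^2,
\]
which is even better than the stated bound $2((3g-1)k)^2$. The stated constant leaves slack, for instance for counting each crossing with $\mathcal S'$ as contributing two arc-endpoints per side in a cruder estimate.

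The main obstacle is making the local-to-universal-cover argument rigorous. We need that each face of $\mathcal S'$ lifts homeomorphically, which holds because it is a disk. We also need that arcs in $D$ correspond to subarcs of distinct or identical lifts in $\tilde D$, so that the ``at most once'' property transfers. Both come directly from Claim~\ref{cl:intersections-cover}.
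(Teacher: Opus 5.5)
Your proof follows the paper's argument: you extract a filling subfamily of at most $3g-1$ curves with Lemma~\ref{le:genus-size-filling-family}, cut $\mathcal C$ into arcs in the disk faces of $\mathcal S'$, observe that the arcs in a face are simple and pairwise cross at most once, and sum over faces. Your count is sharper than the paper's. The paper bounds the crossings in a face $f$ by $\binom{p(f)}{2}$, where $p(f)$ is the number of arc \emph{endpoints}, whereas you use the number of arcs, which is $p(f)/2$. Since there are exactly $L$ arcs in total, your bound $\binom{L}{2}\le\frac12((3g-1)k)^2$ is correct and improves the constant by a factor $4$.

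The step you should justify differently is ``two arcs in a face cross at most once''. When Claim~\ref{cl:intersections-cover} is applied to two distinct lifts of the same curve, it needs that curve to be primitive. In Proposition~\ref{pr:filling-systoles} this is guaranteed by Theorem~\ref{th:schrijver2}, but a $k$-saturating system is not assumed to consist of primitive curves. Two distinct lifts of a non-primitive curve in minimal position can cross infinitely often. For example, on the torus the representative $t\mapsto(2t,\epsilon\sin 2\pi t)$ of the class $(2,0)$ has a single self-crossing. Yet its two lifts differing by the translation $(1,0)$ are the graphs of $\pm\epsilon\sin(\pi x)$, which cross at every integer abscissa. So the universal-cover transfer does not go through as stated.

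The paper's local argument is enough and needs no covering space. Since $D$ is a disk, a self-crossing arc, or two arcs crossing twice in $D$, would produce a monogon or bigon in $D$. It can be removed by a homotopy supported in $D$ that fixes the endpoints on $\partial D$. This lowers $\crossnum(\mathcal C)$ without changing any homotopy class, contradicting the minimal position of $\mathcal C$. With this replacement your proof is complete.
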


\begin{proof}
By the definition of a $k$-saturating system of curves and thanks to Lemma~\ref{le:genus-size-filling-family}, there exists a filling family $\mathcal S = (s_i)_{i\in I}$ of closed curves of length at most $k$ with respect to $\mathcal{C}$ and of size at most $3g-1$. In particular, for every $i\in I$, $\crossnum(s_i, \mathcal C)=k$.

Recall that by the definition of $k$-saturating, the system $\mathcal{C} \cup \mathcal{S}$ is in minimal position. We think of $\mathcal{S}$ as a $4$-valent graph that is a combinatorial surface. In other words:
\begin{itemize}
    \item Every curve of $\mathcal C$ intersects transversely the curves $s_i$, and
    
    \item Every intersection vertex of $\mathcal C$ is inside a face of $\mathcal S$.
\end{itemize}
We denote by $F$ the faces of $\mathcal S$. For such a face $f$, let $p(f)$ denote the perimeter of $f$ i.e. the number of crossings of $\mathcal C$ with the boundary of $f$. Then, by double counting, $\sum_{f\in F} p(f) = 2 \crossnum(\mathcal S, \mathcal C)\le 2 \cdot k \cdot \# \mathcal S \le 2 (3g-1) k$.

Similarly let $n(f)$ denote the number of vertices of $\mathcal{C}$ in the face $f$. Then $\crossnum(\mathcal C)=\sum_{f\in F}n(f)$.
As $\mathcal C$ is in minimal position, the restriction of $\mathcal C$ to any face $f$ is also in minimal position. This implies that any two curves in $\mathcal{C}$ cross at most once in $f$, since otherwise they would form a bigon and thus would not be in minimal position. Thus $n(f)\le \frac{p(f) \cdot (p(f) - 1)}{2}$.
So
\[
\crossnum(\mathcal{C})
= \sum_{f\in F} n(f) \le \sum_{f\in F} \frac{p(f)^2}{2} - \frac{p(f)}{2} 
 \le \frac{1}{2}  \left(\sum_{f\in F} p(f) \right)^2
\le \frac{(2 (3g-1) k)^2}{2}.\qedhere
\]
\end{proof}

Combining Proposition~\ref{pr:filling-systoles} and Theorem~\ref{th:vertices-number-systolic} provides us with a first bound of $O(k^2g^2)$ on the number of vertices in a $k$-irreducible system of curves, and thus on the number of edges in a $k$-irreducible triangulation via the reduction in Section~\ref{S:irred}. In the next section we add another ingredient to strengthen this bound to $O(k^2 g)$.

\begin{remark}
Note that the family $\mathcal{C}$ in Theorem~\ref{th:vertices-number-systolic} is not assumed to be geodesically $k$-covered, and thus this theorem could be of independent interest. It is easy to see that Theorem~\ref{th:vertices-number-systolic} is tight up to a constant factor, by considering a genus $g$ orientable surface obtained by identifying opposite edges in a $2g$-gon, and a system of curves made of $k$ parallel curves between each pair of opposite edges.
\end{remark}

\section{Upper bounds for geodesically \boldmath\texorpdfstring{$k$}{k}-covered systems of curves}\label{S:linear}

In this section, we show: 

\kirreduciblesystem*

To prove this result, we first establish a preliminary lemma on configurations of arcs in a disk. An \define{arc} in a disk is a curve with endpoints on the boundary of the disk. For a system of arcs $\mathcal{A}$ in minimal position in a disk, we denote by $d$ the largest distance between a face of $\mathcal A$ and the boundary of the disk (where we measure lengths in terms of crossings with $\mathcal{A}$) and by $n$ the number of intersections between pairs of arcs of $\mathcal A$. The following inequality allows us to bound $n$ in terms of $d$ and $p$.

\begin{lemma}
\label{le:radius-perimeter-area-relation}
Let $\mathcal A$, $p$, $d$, $n$ be as above. Then $(6d+7)p\ge n$.
\end{lemma}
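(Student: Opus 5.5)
The plan is to organize the arrangement $\mathcal A$ into layers by distance from the boundary of the disk $D$, bound the number of crossings in each layer linearly in $p$, and then sum over the $d+1$ layers. Throughout, $p$ is the number of arc endpoints on $\partial D$ (the perimeter, i.e.\ twice the number of arcs).

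\textbf{Setting up the layers.} For $0\le i\le d$, let $R_i$ be the closure of the union of all faces of $\mathcal A$ at distance at least $i$ from $\partial D$. Then $R_0=D$, $R_{i+1}\subseteq R_i$, and every face in $R_i\setminus R_{i+1}$ is at distance exactly $i$. I will use two consequences of minimal position in a disk: no arc self-intersects, and two arcs cross at most once (Claim~\ref{cl:intersections-cover}, applied in the disk by the same Jordan-curve bigon argument).

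\textbf{The invariant to prove.} I would show by induction on $i$ that the frontier $\partial R_i$ (a union of edges of $\mathcal A$) is crossed transversally by at most $p$ arc segments. Here a segment is a maximal piece of an arc inside $R_i$. The heuristic is that each arc $a$ enters and leaves $R_i$ at most once. Suppose instead that $a$ left $R_i$ and re-entered. The piece of $a$ outside $R_i$ then lies in faces at distance less than $i$, while the frontier piece it cuts off is bounded by other arcs. A Jordan-curve argument should then force either a bigon, contradicting minimal position, or a face of $R_i$ that is closer to $\partial D$ than $i$, contradicting its distance. I expect this to be the main obstacle. The frontier can be disconnected, can pinch at vertices, and the distance function interacts with arcs only through their sides, so the case analysis must be done with care. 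If the ``once per arc'' statement fails, I would fall back on a weaker bound such as $2p$ segments and absorb the loss into the constants.

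\textbf{Counting crossings in one layer.} Fix $i$ and count the vertices of $\mathcal A$ lying in the annular region $R_i\setminus \operatorname{int} R_{i+1}$, including those on its two frontiers. Every face of this layer is at distance exactly $i$, so each such face is adjacent across an edge to the outer frontier $\partial R_i$ (or to $\partial D$ when $i=0$). Consider the planar graph formed by the arc pieces inside the layer together with the two frontiers. Applying Euler's formula to it, where every face touches the outer boundary, is an outerplanar-type situation. This should bound the number of its vertices by a constant times the number of boundary incidences, which is at most about $p+p$ by the invariant applied to $\partial R_i$ and $\partial R_{i+1}$. I would aim for at most $6p$ vertices per layer. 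For $i=0$ one must additionally account for the $p$ points on $\partial D$, and a separate pass handles the innermost region $R_d$, which consists of faces at distance $d$ only. Together these extra terms give the additive $7p$.

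\textbf{Conclusion.} Summing the per-layer bound over $i=0,\dots,d-1$ and adding the boundary and innermost terms gives $n\le 6dp+7p=(6d+7)p$.
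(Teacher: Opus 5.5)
Your per-layer count does not go through, and this is a more serious gap than the invariant you flag. Euler's formula on a layer (interior vertices of degree $4$, every face of the layer having an edge on $\partial R_i$) only bounds the number of vertices of the layer by a constant times the number of vertices lying on the two frontiers $\partial R_i$ and $\partial R_{i+1}$. These include every corner where a frontier turns from one arc to another. Your invariant counts only arc segments entering $R_i$, not corners, so ``boundary incidences $\le p+p$'' does not follow. In fact nothing you use at this step can suffice. Take two arcs crossing each other $N$ times. Then $p=4$ and $d=1$, and $R_1$ is the union of the bigons. Each arc meets $R_1$ in a single segment, and every face of layer $0$ touches $\partial D$. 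Yet layer $0$ contains all $N$ crossings, which exceeds $(6d+7)p=52$ once $N>52$. So the fact that two arcs cross at most once must enter the layer count itself, in an essential, zone-theorem-like way. For $i=0$, the claim that layer $0$ has $O(p)$ vertices is exactly a zone theorem for $\partial D$ in an arrangement of pseudo-chords, and your sketch contains no such argument.

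The invariant is also false as stated, even in minimal position. Let an arc $a$ run through several blocks. Each block consists of pairwise disjoint chords crossing $a$, together with nested caps above and below $a$. The caps are chords with both endpoints on a short arc of $\partial D$ containing the upper (respectively lower) endpoints of the block's chords, so each cap crosses each chord of the block exactly once. Between consecutive blocks, leave a face on each side of $a$ that reaches $\partial D$. Every pair of arcs crosses at most once. With large enough blocks, the faces along $a$ in the middle of each block are at distance $\ge i$, while those at the gaps are at distance $0$. So $a\cap R_i$ has as many components as there are blocks, and your fallback of $2p$ segments is unsupported.

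The paper avoids layers altogether. It takes the intersection graph $G$ of the arcs: one vertex per arc, and exactly $n$ edges since two arcs cross at most once. The outer face is within distance $d$ of every face in the map graph of the arrangement, so that map graph has radius at most $d$. The theorem of Hickingbotham, Illingworth, Mohar and Wood then gives $\tw(G)\le 6d+7$. The paper concludes because a graph of treewidth $t$ has fewer than $t$ times as many edges as vertices. The constant $6d+7$ is inherited from that theorem. A layer-by-layer proof would need its own genuinely combinatorial input, such as a zone theorem, and would not naturally produce that constant.
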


Our proof relies on a recent structural result of Hickingbotham, Illingworth, Mohar and Wood~\cite{hickingbotham-illingworth-mohar-wood-2024} and the fact that a graph of treewidth $k$ with $n$ vertices has less than $kn$ edges (see, e.g., Baste, Noy and Sau~\cite{baste2018number}).

\begin{proof}
Let $G$ be the intersection graph of $\mathcal A$: it has one vertex per arc of $\mathcal A$ and an edge between vertices corresponding to intersecting pairs of arcs. By definition, $G$ has $p$ vertices and $n$ edges.

Let $M_G$ be the map graph of $\mathcal A$ which has one vertex for each face of $\mathcal{A}$, including the outer face, and edges between vertices corresponding to faces that share at least one vertex. Let $\rad(M_G)$ be its radius, that is, the minimum radius of a ball that covers $M_G$. Then, by~\cite[Theorem~5]{hickingbotham-illingworth-mohar-wood-2024}, we have $\tw(G) \le 6 \cdot \rad(M_G)+7$, where $\operatorname{tw}(G)$ denotes the treewidth of $G$.

Let $v_\infty$ be the vertex of $M_G$ corresponding to the external face. Then $v_\infty$ is at distance at most $d$ from every face of $\mathcal A$ in the dual graph of $\mathcal A$. It follows from the definitions that the distances in the map graph are smaller than those in the dual graph, so $v_\infty$ is at distance at most $d$ in $M_G$ from every point of $M_G$. Thus $\rad(M_G)\le d$ and $\operatorname{tw}(G)\le 6d+7$.

Since a graph of treewidth $t$ and $v$ vertices has strictly less than $tv$ edges, $(6d+7)p\ge n$.
\end{proof}

We can now conclude the proof of Theorem~\ref{th:size-k-irreducible-system}.

\begin{proof}[Proof of Theorem~\ref{th:size-k-irreducible-system}]
The proof starts like the proof of Theorem~\ref{th:vertices-number-systolic}. By Proposition~\ref{pr:filling-systoles} and Lemma~\ref{le:genus-size-filling-family}, a geodesically $k$-covered system of curves is $2k$-saturating and thus there is a filling family $\mathcal{S}$ of at most $3g-1$ closed curves of length at most $2k$ with respect to $\mathcal{C}$. We think of $\mathcal{S}$ as a $4$-valent combinatorial surface with a family of faces $F$ and denote by $n(f)$ the number of vertices of $\mathcal{C}$ in a face $f$, and by $p(f)$ the perimeter of $f$, i.e., the number of intersections of $\mathcal{C}$ and the boundary of $f$. 

\begin{claim}
For any face $f$ of $F$, $n(f)\le (3k+7)\cdot p(f)$.
\end{claim}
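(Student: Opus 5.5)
The plan is to apply Lemma~\ref{le:radius-perimeter-area-relation} to the arc system $\mathcal{A}_f := \mathcal{C} \cap f$ inside the disk $f$. Since $\mathcal{C}\cup\mathcal{S}$ is in minimal position, $\mathcal{A}_f$ is a system of arcs in minimal position in $f$. It has $p(f)/2$ arcs (each arc has two endpoints on $\partial f$), and its $n(f)$ crossings are exactly the vertices of $\mathcal{C}$ inside $f$. The lemma then gives $n(f) \le (6d+7)\, p(f)/2$, where $d$ is the largest distance, counted in crossings with $\mathcal{A}_f$, from a face of $\mathcal{A}_f$ to $\partial f$. So it suffices to prove $d \le k/2$, which gives $n(f)\le (3k+7)p(f)/2 \le (3k+7)p(f)$. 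Even the cruder bound $d\le k$, combined with using $p(f)$ instead of $p(f)/2$ as the number of arcs, would land near the target $(6k+7)p(f)/2$. In fact any bound $d \le k/2 + O(1)$ is enough, because of the factor-$2$ slack.

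The core step is bounding $d$ using the geodesically $k$-covered hypothesis through Claim~\ref{cl:filling-vertex-property}. Fix a face $\phi$ of $\mathcal{A}_f$ inside $f$. If $\phi$ has a vertex $v$ of $\mathcal{C}$ on its boundary, Claim~\ref{cl:filling-vertex-property} applied to the two edges of $\phi$ at $v$ gives a geodesic $c$ of length at most $k$ passing through the corner of $\phi$ at $v$. Lift to the universal cover. The lift $\tilde f$ of $f$ is a disk bounded by lifts of curves of $\mathcal{S}$, and $\tilde c$ is a line, since by Claim~\ref{cl:intersections-cover} it is simple and not closed. Hence $\tilde c$ enters $\phi$ and must leave $\tilde f$ in both directions. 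Equivalently, if $c$ stayed in $f$ it would be contractible, as $f$ is a disk. Thus $c$ gives two disjoint paths from $\phi$ to $\partial f$ whose total number of crossings with $\mathcal{C}$ is at most $k$. The shorter path has at most $k/2$ crossings with $\mathcal{A}_f$, so $d(\phi)\le k/2$.

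Two points need care. First, crossings of $c$ with $\mathcal{C}$ outside $f$ only help, since we count crossings up to the first exit. Second, the two subarcs of $\tilde c$ on either side of $v$ cross distinct strands of $\mathcal{A}_f$, so the counts really add. Faces with no vertex on their boundary are either adjacent to $\partial f$ (distance $0$) or bounded only by arcs without crossings, which is impossible for an interior face. So every face of $\mathcal{A}_f$ is handled.

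The main obstacle I expect is making the ``two disjoint exits'' argument rigorous with the right constant. One issue is that $c$ is a geodesic for $\mathcal{C}$ but not necessarily in minimal position with $\mathcal{S}$, so it may leave $f$ and re-enter. This is harmless, because only the first exit on each side is used. Another issue is checking that the distance in Lemma~\ref{le:radius-perimeter-area-relation} is measured compatibly, i.e.\ as crossings with arcs in the dual graph. If the halving fails, I would fall back to $d\le k$. Together with the $p(f)/2$ arc count this still gives $n(f)\le (6k+7)p(f)/2$, which is slightly weaker than the claim, so the halving is genuinely needed there.
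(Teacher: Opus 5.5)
Your proof is correct and follows the paper's route: you show $d\le k/2$ because a geodesic of length at most $k$ passes through each face (you get it explicitly from Claim~\ref{cl:filling-vertex-property}) and must leave the disk $f$ in both directions, then apply Lemma~\ref{le:radius-perimeter-area-relation}; your explicit two-sided halving and the $p(f)/2$ arc count only make the paper's terse argument more careful. One small slip in your last paragraph: the fallback $d\le k$ with $p(f)/2$ arcs gives $(3k+\tfrac{7}{2})\,p(f)\le(3k+7)\,p(f)$, so it would already suffice and the halving is not genuinely needed.
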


\begin{proof}
\renewcommand\qedsymbol{$\lrcorner$}
In order to prove this claim we show that $d\leq \frac{k}{2}$. Since $\mathcal{C}$ is geodesically $k$-covered, for any point $R$ in $f$ that does not lie on an intersection of two arcs, there is a geodesic going through $R$ of length at most $k$. This geodesic must intersect the boundary of $f$, and thus $R$ is at distance at most $\frac{k}{2}$ from this boundary. Then we have $d\leq \frac{k}{2}$.

By Lemma~\ref{le:radius-perimeter-area-relation}, $(6d+7)p(f)\geq n(f)$. Thus $n(f)\le (3k+7)\cdot p(f)$.
\end{proof}

We conclude using $\sum_{f\in F}n(f)=\crossnum(\mathcal{C})$ and $\sum_{f\in F} p(f) = 2 \crossnum(\mathcal S, \mathcal C) \le 2 (3g-1) 2k$, which gives
\begin{align*}
\crossnum(\mathcal{C})= \sum_{f\in F} n(f) &\le \sum_{f\in F} (3k+7)\cdot p(f) \\
&\le (3k+7) \cdot 2 \cdot (3g-1) \cdot 2k \\
&\le 120k^2 \cdot g.\qedhere
\end{align*}

\end{proof}

\section{Non-orientable surfaces}\label{S:nonorientable}

In this section, we explain how to extend our result to the projective plane and show the limitation of our approach to the general case of non-orientable surfaces. We restate our result for the projective plane for convenience.

\nonorientable*

\begin{proof}[Proof of Theorem~\ref{t:projectiveplane}]
Let $T$ be a $k$-irreducible triangulation of the projective plane, i.e., $S$ has non-orientable genus $1$. As in the case of orientable surfaces, we consider an edge inclusion-wise minimal subgraph $G$ with the property that shortest non-contractible nooses on $G$ still have length $k$, and we denote by $\mathcal{C}$ the system of curves associated to $T$. Note that $\mathcal{C}$ is $2k$-irreducible, but Theorem~\ref{th:schrijver2} does not apply on non-orientable surfaces, and thus we do not know \emph{a priori} whether $\mathcal{C}$ is in minimal position. 

By assumption, there exists a geodesic $c$ of length $k$ going through every edge of $T$, we choose one arbitrarily. Note that $c$ must be a simple curve, otherwise there would be a shorter geodesic. On a projective plane, any non-contractible closed curve cuts the surface into a disk, so we can use $c$ as the analogue of the filling family we used for $k$-saturating systems of curves. We consider the system of arcs $\mathcal{A}$ induced by $\mathcal{C}$ on the disk $D:=S \setminus c$, and this system of arcs must be in minimal position. Indeed, the techniques of Schrijver~\cite[Proof of Main Lemma]{schrijver1991decomposition} do apply in the setting of a disk: if the system $\mathcal{A}$ was not in minimal position, there would be a monogon or a bigon in the disk $D$ (see for example Hass and Scott~\cite[Proof of Lemma~1.6]{hass1994shortening}), and this would allow for a smoothing that does not change the length of boundary to boundary paths in $D$. Since in the projective plane, any pair of non-contractible curves intersect, this smoothing would not change the length of geodesics, contradicting the $2k$-irreducibility of $\mathcal{C}$.

There remains to bound the size of a system of arcs in minimal position in a disk. We could use Lemma~\ref{le:radius-perimeter-area-relation} as before, but in this case there is a stronger and more direct bound. Here, the perimeter of the disk is $4k$ since the length of $c$ is $2k$ and gets doubled when cutting because $c$ is one-sided, and thus there are $2k$ simple arcs, crossing pairwise at most once. Actually, each pair of arcs has to cross exactly once, otherwise there would be a path of length less than $2k$ joining two antipodal points on the boundary of $D$ and thus the systole would be less than $2k$, contradicting $2k$-irreducibility. Such an arrangement of arcs is called an arrangement of pseudolines in the disk. This directly implies that there are exactly $\binom{2k}{2}=k(2k-1)$ crossings of $\mathcal{C}$ in $D$ and thus on $S$. 

We conclude by proceeding as in the proof of Theorem~\ref{th:size-k-irreducible-system} in Section~\ref{S:irred}. In order to do that, we claim that there are no monogons or bigons in $\mathcal{C}$: if there were such a monogon or bigon, after cutting it by $c$, we would observe in $D$ either a monogon, a bigon, two non-crossing parallel arcs, a bigon between an arc and $c$ or a pair of arcs crossing twice in $D$ (because $k>1$). All of these cases are incompatible with the fact that the arcs form an arrangement of pseudolines. We obtain that $|V(T)|\leq\frac{2}{3} \crossnum(\mathcal{C}) = \frac{2 k (2k-1)}{3} \leq \frac{4}{3} k^2$, and thus $|E(T)| \leq 4k^2$.
\end{proof}

In the case of the projective plane, this proof technique is very efficient and yields a tight bound on the number of crossings in a $k$-irreducible system of curves. However, there is an inherent loss when converting this bound in the setting of triangulations, and we obtain an upper bound of $2k(2k-1)/3$ on the number of vertices of a $k$-irreducible triangulation. This should be compared to a construction of Melzer~\cite[Section~6.2.1]{melzer2019k} with $k(k-1)+1$ vertices, which he conjectured to be optimal. Closing this gap remains an interesting open problem.

We now discuss the complication that appears for general non-orientable surfaces. The main difficulty is that Schrijver's Theorem~\ref{th:schrijver2} is not known in that setting. Given a $k$-irreducible triangulation $T$ on a non-orientable surface $S$, one could consider its lift $T'$ to the orientable cover $S'$ of $S$. The graph $T'$ is indeed a triangulation. The triangulation $T'$ is however generally not geodesically covered as shown in the following proposition.
\begin{figure}[!ht]
\begin{center}\includegraphics{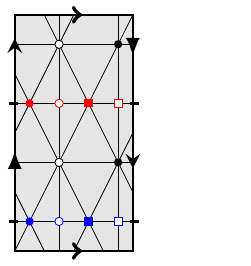}\hspace{1cm}\includegraphics{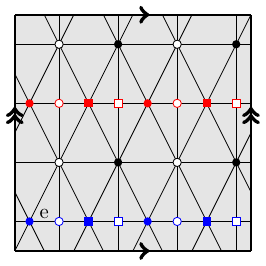}\end{center}
\caption{The 4-irreducible triangulation $T$ on the Klein bottle and its orientation cover $T'$ that appear in Proposition~\ref{prop:klein-bottle}.}
\label{fig:klein-bottle}
\end{figure}
\begin{proposition}\label{prop:klein-bottle}
Let $T$ and $T'$ be the triangulations of the Klein bottle and the torus from Figure~\ref{fig:klein-bottle}. Then $T$ is geodesically $4$-covered but its orientable cover $T'$ is not geodesically $k$-covered for any $k$.
\end{proposition}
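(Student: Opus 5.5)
We verify the two claims separately. Both rely on an explicit description of $T$ and $T'$, which we read off from Figure~\ref{fig:klein-bottle}.

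\emph{$T$ is geodesically $4$-covered.} We interpret this in the triangulation sense used in the introduction and in the projective-plane argument. That is, we must check that every edge of $T$ lies on a closed walk of length at most $4$ that is a geodesic, meaning shortest in its free homotopy class on the Klein bottle.

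1. We describe $T$ as a quotient of a small grid, with side identifications: one pair glued by a translation and the other by a glide reflection. We compute its edge-width, which should be $4$.
2. For each edge orbit under the symmetries of $T$, we exhibit an explicit closed walk of length $4$ through it, running along the horizontal, vertical or diagonal directions.
3. We show each such walk is a geodesic. Walks of length exactly the systole are automatically geodesic, since no non-contractible class contains a shorter curve. So it suffices to check that the exhibited walks are non-contractible; we do this by reading their homology or $\pi_1$ class off the identification pattern.

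By the symmetry of the grid, only a few edge types need checking.

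\emph{$T'$ is not geodesically $k$-covered for any $k$.} The torus $T'$ is the double cover of $T$, obtained by gluing two copies of the fundamental domain. The plan is to find one edge $e$ of $T'$ that lies on no geodesic at all. Then no choice of $k$ can cover $e$.

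1. Describe $T'$ as a triangulated torus, i.e.\ a quotient of a triangulated lattice $\mathbb{Z}^2/\Lambda$. Here $\Lambda$ is generated by the translation side and by twice the glide reflection.
2. In the torus, geodesics in a homotopy class $(a,b)$ correspond to shortest lattice walks in the universal cover. Their length is given by an explicit norm $N(a,b)$ on $\mathbb{Z}^2$ determined by the local triangulation pattern.
3. Pick an edge $e$ in the distinguished direction. This is the diagonal direction that is flipped by the glide reflection, so in $T'$ the diagonals of the two copies point in opposite directions.
4. Show that any closed walk using $e$ can be shortened. Using $e$ forces a detour: if the walk goes from $x$ to $y$ through $e$, there is a strictly shorter path from $x$ to $y$ in the same homotopy class, because the diagonals near $e$ point the wrong way. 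Replacing that subpath gives a homotopic walk that is shorter. Hence no geodesic contains $e$, in any homotopy class.

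The main obstacle is step 4. It requires a clean local shortening argument in the universal cover, replacing a two-step path that uses $e$ by an equal-or-shorter path that avoids it, together with strict improvement somewhere. I would first try to show that the distance function in the lift satisfies $d(x,y) < $ the walk length whenever the walk traverses $e$, by an explicit comparison with the norm $N$. If equality cases occur, I would instead argue via the system-of-curves picture. The smoothing at the medial vertex of $e$ corresponding to deleting $e$ then leaves every geodesic length unchanged, which violates the definition of being geodesically $k$-covered.
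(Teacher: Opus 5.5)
Your plan is essentially the paper's proof. For $T$ you exhibit non-contractible closed walks of length $4$ through every edge (geodesic because the systole is $4$). For $T'$ you exhibit one edge $e$ avoided by every geodesic, which the paper reduces to the finite check that shortest paths between boundary vertices of a fixed disk neighbourhood of $e$ never use $e$, so any closed walk through $e$ can be strictly shortened inside that disk.

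Two adjustments. In your step~4, take $x,y$ to be the vertices where the walk enters and leaves such a neighbourhood; strictness cannot come from two-step replacements, since a two-step subpath $x\,u\,v$ is already shortest unless $x=v$ or $xv$ is an edge. Also drop the deletion-smoothing fallback: it concerns nooses crossing $e$ rather than walks along $e$, so it neither follows from an equality case nor bears on whether $e$ lies on a geodesic.
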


\begin{proof}
Showing that $T$ is geodesically $4$-covered is tedious but straightforward. One just has to draw all non-contractible walks of length 4 on $T$ and verify that they cover all edges.

Proving that $T'$ is not geodesically $k$-covered for any $k$ is more delicate. Indeed, one has to exhibit an edge that is avoided by all geodesics. We claim that it is the case for the edge $e$ on Figure~\ref{fig:klein-bottle}. The proof can also be reduced to a finite tedious but straightforward check. Namely, given the neighborhood of $e$ shown on Figure~\ref{fig:unused-edge-neighborhood}, it is enough to check that all shortest paths from a boundary vertex to another avoid $e$.
\begin{figure}[!ht]
\begin{center}\includegraphics{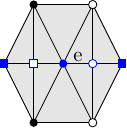}\end{center}
\caption{A neighborhood of the edge $e$ from the triangulation $T'$.}
\label{fig:unused-edge-neighborhood}
\end{figure}
\end{proof}

\subparagraph*{Acknowledgements.} We are grateful to anonymous reviewers for helpful remarks and suggestions.

\bibliographystyle{plainurl}
\bibliography{biblio.bib}

\end{document}